\newtheorem*{rep@theorem}{\rep@title}
\newcommand{\newreptheorem}[2]{%
\newenvironment{rep#1}[1]{%
 \def\rep@title{#2 \ref{##1}}%
 \begin{rep@theorem}}%
 {\end{rep@theorem}}}
\newtheorem{theorem}{Theorem}
\newtheorem{corollary}[theorem]{Corollary}
\newtheorem{lemma}[theorem]{Lemma}
\newtheorem{proposition}[theorem]{Proposition}
\newtheorem{example}[theorem]{Example}
\theoremstyle{plain}
\newtheorem{convention}[theorem]{Convention}
\DeclarePairedDelimiterX{\inp}[2]{\langle}{\rangle}{#1, #2}
\NewDocumentCommand\Hil{o}{%
  \IfNoValueTF{#1}
   {\mathcal{H}}
   {\mathcal{H}^{#1}}
}
\NewDocumentCommand\LH{o}{%
  \IfNoValueTF{#1}
   {\mathcal{L}(\mathcal{H})}
   {\mathcal{L}(\mathcal{H}^{#1})}
}
\NewDocumentCommand\HH{o}{%
  \IfNoValueTF{#1}
   {\mathcal{L}^*(\mathcal{H})}
   {\mathcal{L}^*(\mathcal{H}^{#1})}%
}
\newcommand{\pp}[1]{#1^\perp}
\newcommand{\uh}{\hat{u}}
\newcommand{\ein}{\ \overline{\in}\ }
\newcommand\id{\leavevmode\hbox{\small1\kern-3.3pt\normalsize1}}
\begin{document}

\begin{CJK*}{UTF8}{gbsn}
\title{Correlational quantum theory and correlation constraints}
\author{Ding Jia (贾丁)}
\email{ding.jia@uwaterloo.ca}
\affiliation{Department of Applied Mathematics, University of Waterloo, Waterloo, Ontario, N2L 3G1, Canada}
\affiliation{Perimeter Institute for Theoretical Physics, Waterloo, Ontario, N2L 2Y5, Canada}

\begin{abstract}
A correlational dialect is introduced within the quantum theory language to give a unified treatment of finite-dimensional informational/operational quantum theories, infinite-dimensional relativistic quantum theories, and quantum gravity. Theories are written in terms of correlation diagrams which specify correlation types and weights. Grouping similar correlation diagrams leads to generalized Feynman diagrams, which in special cases reduce to the familiar Feynman diagrams from quantum field theories. The correlational formalism is applied in a study of correlation constraints, revealing new classes of quantum processes that evade previous characterizations of general quantum processes. The results apply to quantum theories of various kinds, including time-asymmetric theories, time-symmetric theories, theories without predefined time, and theories with indefinite causal structures.
\end{abstract}

\maketitle
\end{CJK*}

\section{Introduction}

Quantum theory has gone through several phases of evolution. It started as the quantum mechanics of particles. Then came the quantum theory of fields. More recently the quantum theory of information has been on the rise. Will the future reveal yet new phases of quantum theory?




Our vision is that besides quantum particles, fields, and bits (dits), \textit{quantum correlations} should also be used as a fundamental concept in constructing quantum theories. Here we understand quantum correlation in a both general and specific way -- general because we hope to build correlational theories that break the boundaries among theories based on quantum particles, fields, and bits (dits), and specific because we want to offer a concrete prescription for constructing correlational theories.

Generally, we understand quantum correlation as \textit{anything that is mediated and has a quantifiable weight} in a quantum theory. At this level, the notion of quantum correlation is intentionally abstract so that it transcends the distinction between particles and fields, which both involve mediated quantifiable correlations, and go beyond qudits, which are limited to finite dimensions. 
In particular theories, the substance (e.g., particles, fields etc.) that mediates the quantum correlations will be specified, and the abstract notion of quantum correlation will be supplemented with more concrete physical connotations.

Specifically, we offer a prescription to construct correlational quantum theories based on ``correlation diagrams'', which are graphs labelled by correlation types and correlation weights. Upon composition, new diagrams form out of old diagrams, correlations mediate selectively according to type (color) matching, and new weights are calculated according to old weights. 
The essence of this correlational formalism is to keep track of the mediation of correlations in a manifest way through the composition of diagrams. Previous works on quantum theory that hold correlations essential include \cite{Rovelli1996RelationalMechanicsb, Mermin1998TheMechanics, *Mermin1998WhatUsb, HardyProbabilityGravity, *hardy2007towards, KempfTheStatistics}. 
Previous works that express quantum theory as a compositional/diagrammatic theory include \cite{Feynman1949Space-TimeElectrodynamics, *tHooft1974Diagrammarb, Coecke2004TheEntanglement, *coecke2014TheEntanglement, *abramsky2009categorical, *coecke2017picturing, oeckl2003general, *oeckl2013positive, *Oeckl2019APhysics, Chiribella2010ProbabilisticPurification, *Chiribella2011InformationalTheory, hardy2011foliable, *Hardy2012TheTheory, *Hardy2013OnPhysics}. The correlation diagrams proposed here are in particular similar to Hardy's duotensors. 

As we show in the paper, the diagrammatic setup considered here allows the construction of both finite-dimensional informational/operational quantum theories and infinite-dimensional relativistic quantum theories. It enable us to transport concepts and tools among theories based on quantum particles, field and bits (dits). 

As an interesting example, we show how the notion of Feynman diagrams can be generalized to represent classes of correlation diagrams for both finite and infinite-dimensional theories, with or without perturbative considerations. In relativistic theories the generalized Feynman diagrams reduce to the familiar Feynman diagrams. From this perspective, Feynman diagrams are not merely mathematical bookkeeping devices restricted to field theories or perturbation considerations, but arise generically when quantum correlational configurations are under investigation.

The correlational formalism also supplies useful technical tools. As an example, we introduce a binary string calculus to study correlation constraints under composition and decomposition. This study yields new classes of processes that go beyond previous characterizations of the most general finite-dimensional quantum processes.

Finally, the correlational perspective leads to a relational approach to quantum gravity, as developed in a separate paper \cite{JiaWorldFunction}.

\section{Correlational quantum theories}\label{sec:cd}

Given a multipartite quantum channel $N$, e.g., in its Kraus operator description $N:\rho\mapsto \sum_i K_i\rho K_i^\dagger$, how do we know which parties are correlated?

\begin{figure}
    \centering
    \includegraphics[width=.4\textwidth]{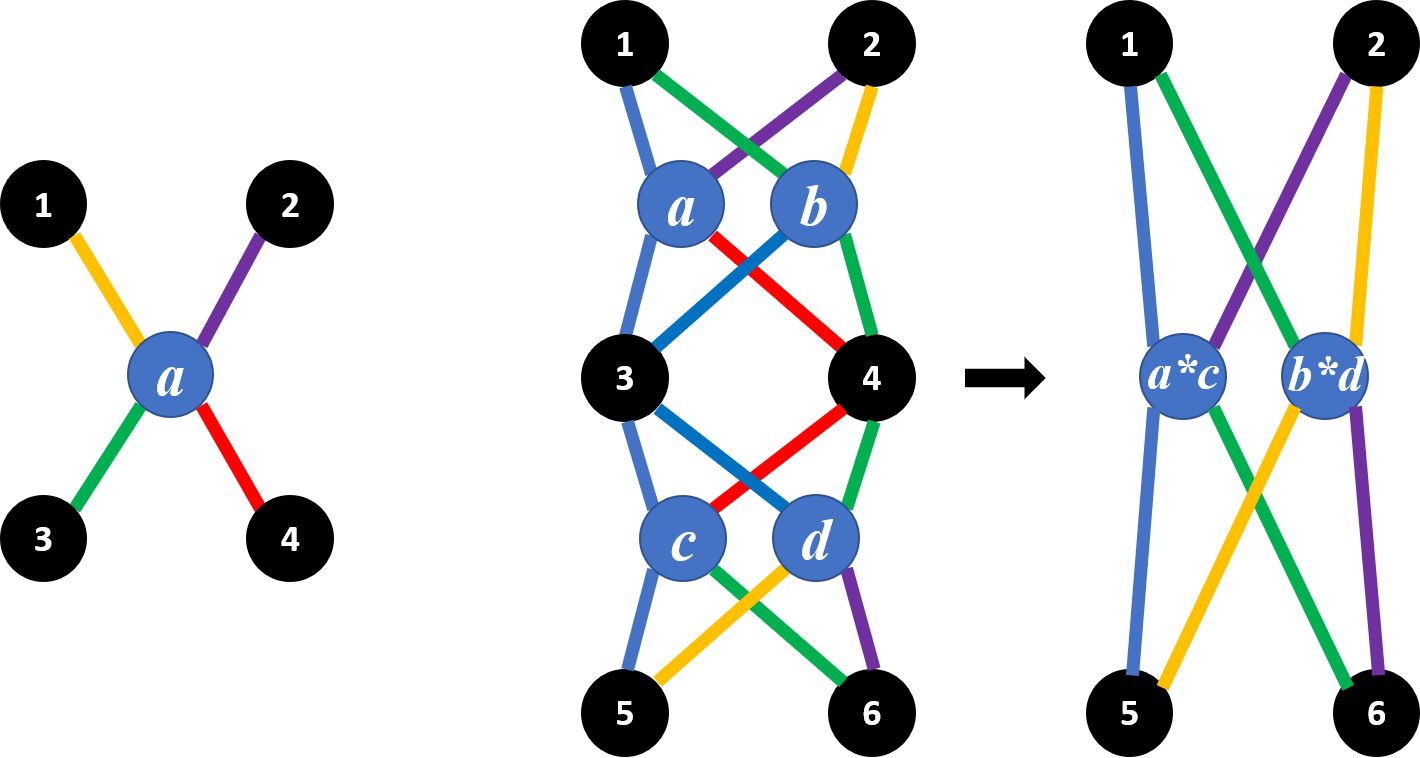}
    \caption{(colored online) Left: A correlation diagram. The vertices represent systems, the colored legs represent correlation types, and the letter represents variables quantifying correlation weights.
    Right: Correlation diagram composition. Compose the pairs $(a,c)$, $(a,d)$, $(b,c)$ and $(b,d)$ at systems $3$ and $4$. The $(a,c)$ and $(b,d)$ pairs have matching correlation types at the composed systems, so the correlation gets mediated through. The new weight variables $a*c$ and $b*d$ are calculated by some yet unspecified theory-dependent algorithms. The $(a,d)$ and $(b,c)$ pairs have mismatching correlation types, so are eliminated. 
    }
    \label{fig:cd}
\end{figure}

To make manifest the correlational structure of quantum processes, we introduce \textbf{correlation diagrams} to represent correlational configurations. 
As illustrated in \Cref{fig:cd}, a correlation diagram is a graph with colored edges representing \textbf{correlation types}, and a weight variable representing \textbf{correlation weights}. Correlation diagrams compose to represent the mediation of correlations. Old diagrams give rise to new diagrams only if the correlation types match at the composed systems. The new weight variable is obtained by theory-specific rules from the old weight variables. This setting captures correlation as something that can be mediated/blocked (according to types) and quantified (according to weight variables).

A process such as a quantum channel is described by a list of correlation diagrams, and only systems/parties sharing diagrams with non-trivial types are correlated (which shall be clear form the study on correlation constraints below).



The correlational formalism allows one to \textit{define} theories directly in terms of correlation diagrams from the outset without going through Hilbert space vectors and operators, much like the path integral formalism \footnote{Indeed, the correlational formalism may be viewed as a \textit{compositional path integral} approach where correlational configurations obeying certain composition rules are summed over.}. 

The correlational formalism is quite versatile. It is applicable to theories with both finite and infinite dimensional systems, as explained below in this section. In addition, it is applicable both at the complex amplitude level (e.g., the level of state vectors) and the real probability level (e.g., the level of density operators). 

\subsection{Finite-dimensional theories}\label{sec:fdt}

In operational terms the basic elements of a quantum theory are preparation, evolution, and measurement. These can all be represented as completely positive (CP) maps \cite{haag1964algebraic, davies1970operational}, taking possibly the trivial one-dimensional systems as inputs and/or outputs \cite{Chiribella2010ProbabilisticPurification, *Chiribella2011InformationalTheory}. Quantum theory can then be formulated in terms of composition of CP maps \cite{chiribella2009theoretical}.

A convenient correlation diagram description of a CP map $A$ is the following.
\begin{enumerate}
\item Obtain the Choi operator \cite{choi1975completely} (\Cref{sec:co}) of the CP map.
\item Expand the Choi operator in a generalized Pauli operator basis (\Cref{sec:gpo}).
\item The basis indices and expansion coefficients correspond to correlation types and correlation weight variables.
\end{enumerate}
The first two steps yields
\begin{align}\label{eq:gpbd}
\mathsf{A}=\sum_{i,j,\cdots, k} a_{ij\cdots k} \sigma_{i}^{1}\otimes \sigma_j^2\otimes\cdots\otimes \sigma_k^{n}
\end{align}
as Choi's positive semidefinite operator of the CP map $A$ associated to $n$ input and output systems. $\sigma^m_i$ is the $i$-th generalized Pauli operator on the $m$-th system, $a_{ij\cdots k}\in \mathbb{R}$ is a coefficient in the basis expansion, and the sum is over all the basis elements of all systems. The third step yields a list of correlation diagrams, with $i,j,\cdots, k$ represented as colored legs on the $1,2,\cdots, n$-th systems, and $a_{ij\cdots k}$ as the weight variables.

The correlation diagram composition rule follows from that of the Choi operators
\cite{chiribella2009theoretical}
\begin{align}\label{eq:cf}
\mathsf{A}*\mathsf{B}:=\frac{1}{d_{\mathcal{S}_*}}\Tr_{\mathcal{S}_*}[\mathsf{A}^{T_{\mathcal{S}_*}}\mathsf{B}].
\end{align}
$\mathcal{S}_*$ is the set of composed systems. $T_{\mathcal{S}_*}$ is the partial transpose with respect to a standard basis on $\mathcal{S}_*$. $\Tr_{\mathcal{S}_*}$ and $d_{\mathcal{S}_*}$ are the partial trace on and dimension of $\mathcal{S}_*$.
Since on every system $\Tr[\sigma_j^T \sigma_i]=\pm 2\delta_{i,j}$ (\Cref{sec:gpo}), all correlation types must match at the composed systems to generate a non-vanishing new diagram, and the new weight variable is the product of the old ones, multiplied by $(\prod_{m\in \mathcal{S}_*} c_m)/d_{\mathcal{S}_*}$, where $c_m=\pm 2$ originates from $\Tr[\sigma_j^T \sigma_i]=\pm 2\delta_{i,j}$ (in the convention of \Cref{sec:gpo}, $-2$ for $i=(j,k)$ with $1\le k<j \le d$, and $+2$ otherwise) and $d_{\mathcal{S}_*}$ from (\ref{eq:cf}). 

The correlation diagram description applies to any Hermitian operator description of processes obeying the composition rule (\ref{eq:cf}). In particular, it applies when processes are not given as CP maps but directly in terms of Hermitian operators (e.g., quantum theories without predefined time \cite{oreshkov2015operational, oreshkov2016operational}). 


The correlation diagrams represent not just the causal \textit{propagation} of correlations, but also the acausal \textit{mediation} of correlations. The correlation diagrams can be used to do (de)composition in the spacelike direction (e.g., decomposing an entangled state into two Hermitian operators), in addition to the timelike direction. Spacelike (de)composition are generically present in relativistic quantum theories (see below), and the current setup provides a finite-dimensional analogue. 


\subsection{Generalized Feynman diagrams}

Feynman diagrams are usually understood as bookkeeping symbols for a perturbation series. Here we present an alternative understanding of Feynman diagrams as classes of similar correlation configurations. They arise in finite- and infinite-dimensional quantum theories, with or without perturbative considerations.

\begin{figure}
    \centering
    \includegraphics[width=.49\textwidth]{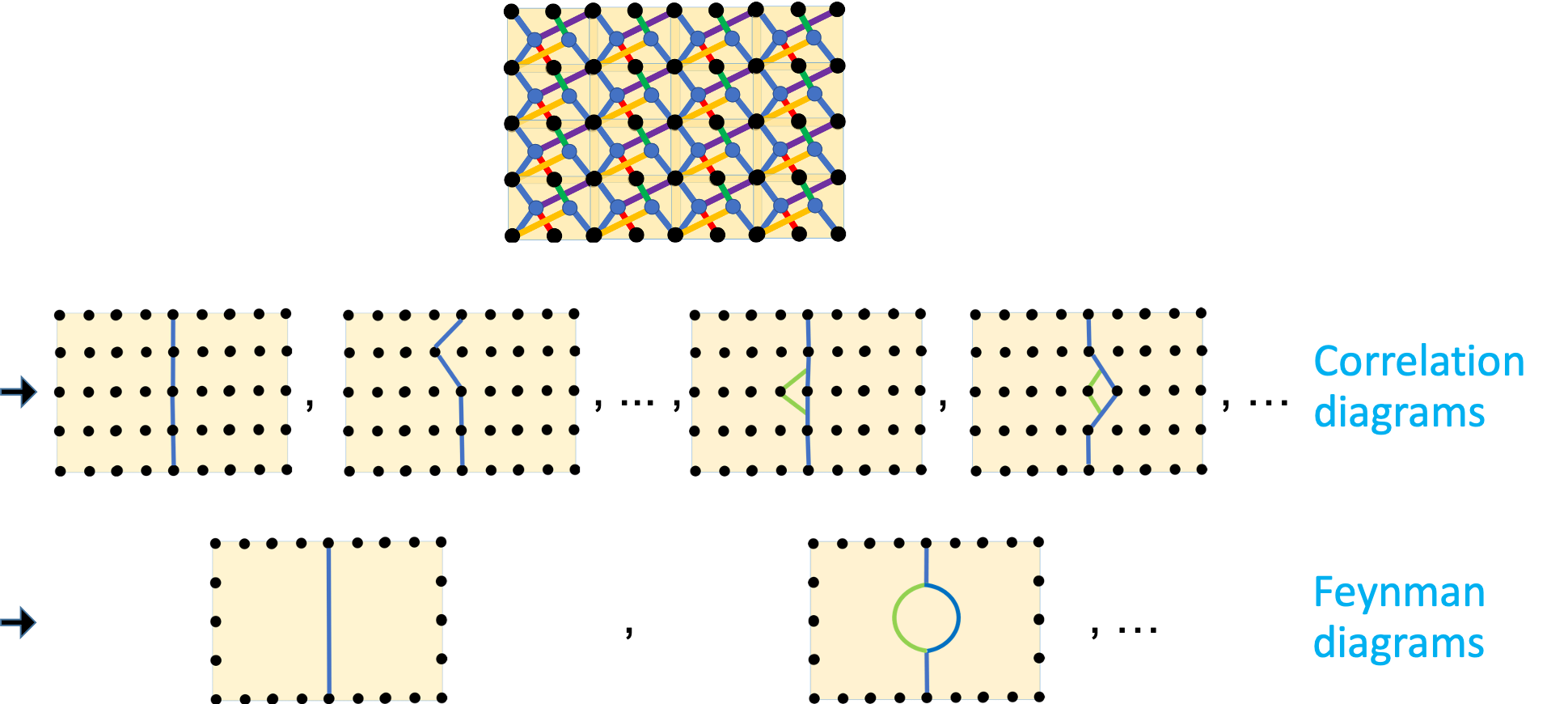}
    \caption{(colored online) Generalized Feynman diagrams as classes of correlation diagrams. In drawing we suppressed weight variables for simplicity.}
    \label{fig:fd}
\end{figure}

Consider the composition of multiple processes, such as the case depicted in \Cref{fig:fd}. Type matching leads to a list of correlation diagrams, such as those in the second line. 
In principle we can keep this list as the description of the composed process, but it is usually possible and convenient to reduce the number of correlational diagrams by grouping similar ones. For instance, for the theories defined in \Cref{sec:fdt} only the correlation types of the edges touching the boundary matter for subsequent compositions with other processes. For these theories, all four diagrams explicitly shown on the second line of \Cref{fig:fd} are of the same type. According to (\cref{eq:gpbd}), their weights add up linearly, so we can group them all together into a single correlation diagram. For a general case diagrams other than these four can also belong to the same type, and we may want to group all such diagrams together. For certain theories such as the relativistic models presented in the next subsection, when enumerating all such diagrams it is convenient to perform some intermediate steps of grouping as illustrated on the third line of \Cref{fig:fd}. The first two diagrams shown on the second line contain edges of the same type that trace out a line. They belong to a class represented by the first diagram on the third line. The other two diagrams shown on the second line contain two sets of edges of the same type that trace out lines, and the two lines touch each other. They belong to a class represented by the second diagram on the third line. 

We refer to diagrams representing classes of similar correlational configurations as \textbf{generalized Feynman diagrams} of the correlational formalism, because they reduce to the familiar Feynman diagrams of relativistic quantum theories in special cases. As noted ahead of \Cref{sec:fdt}, the correlational formalism is applicable at both the complex amplitude and the real probability levels, so the correlational Feynman diagrams could be at either of these levels. In addition, the correlational Feynman diagrams arise in situations both with and without perturbative considerations. As shown next, applied to certain relativistic quantum theories the correlational Feynman diagrams at the amplitude level correspond exactly to the familiar Feynman diagrams of quantum field theory.

\subsection{Relativistic quantum theories}

Relativistic quantum physics is usually studied in the quantum field theory paradigm based on pointwise field variables. 
The correlational formalism offers an alternative based on correlational configurations, which involves extended objects such as lines and surfaces. In this subsection we briefly sketch the main ideas, using free real scalar degrees of freedom as an example. Detailed account of this and other types of theories will be reported elsewhere.



A correlational configuration in a spacetime region is specified by a number of worldlines. Each worldline is either a closed loop, or an open line with ends attached to the boundary of the region. The boundary correlation types are specified pointwise by the number of worldlines attached to the point. Since the worldline number can be any natural number, we are dealing with infinite-dimensional systems. The process of the region is described by the list of all such correlational configurations. When composing different regions, type matching means only configurations whose worldlines numbers match pointwise on the boundary generate new configurations.

Once we have fixed a boundary condition on the region of interest to represent a physically meaningful event, we can group correlational configurations into classes. A generalized Feynman diagram contains open lines to match the boundary condition, plus a number of closed loops. An open line represents a sum over open worldlines matching the same boundary condition, and a closed loop represents a sum over closed worldlines. The lines and loops of the Feynman diagram are not located to any particular curve in spacetime, while the worldlines are. In this sense, the Feynman diagrams are topological classes of localized worldline configurations.

Note that the worldlines summed over can have spacelike parts and can move back and forth in time. We think of these as configurations of the (possibly spacelike) mediation of correlations in spacetime. Thinking in terms of real correlations is more straightforward than in terms of virtual particles that defy causal structures. 

It has been known since Feynman \cite{Feynman1950MathematicalInteraction, *Feynman1951AnElectrodynamics, *Polyakov1987GaugeStrings, *Bern1991EfficientAmplitudes, *Strassler1992FieldActionsb, *SchmidtTheGraphs, *CorradiniSpinningTheory, *Kleinert2009PathMarkets, *Costello2011RenormalizationTheory} that such sums over worldlines reproduce the Feynman field propagators and hence the quantum field theory Feynman diagrams. The correlational model outlined above then forms a relational analogue of free scalar field theory.

To formalize the model we need to give a prescription on enumerating the worldline configurations. This can be done with the help of lattices. Complex scalar, fermion, abelian and non-abelian gauge degrees of freedom can be incorporated with oriented worldlines, worldlines with internal spaces, worldsheets without and with internal spaces. Quantum gravity can be incorporated with the approach of \cite{JiaWorldFunction} or other relational graph-based approaches. These degrees of freedom when coupled together give a rich structure to the correlational configurations. The models at the complex amplitude level can also be developed to the real probability level along the lines of \cite{oeckl2013positive, *Oeckl2019APhysics}. The details of these constructions will be reported elsewhere. 

\section{Correlation constraints}\label{sec:cc}

We now focus on finite dimensions and conduct a study on characterizing constraints of correlations. We introduce a binary string calculus that turns out to simplify deductions and calculations. As a result, we identify new classes of quantum processes that fall beyond previous characterizations of general quantum processes.

A correlation constraint serves to tell which subsystems are allowed to be correlated. we write a constraint as a list of binary strings. For instance, consider channels from systems $s_1s_2$ to systems $s_3s_4$ constrained by $\{1010,0101,1011,0000\}_{s_1s_2s_3s_4}$. Correlations are allowed among systems sharing $1$'s. In this example correlations are allowed among $s_1s_3$, $s_2s_4$, and $s_1s_3s_4$, meaning $s_1$ can signal to $s_3$, $s_2$ can signal to $s_4$, and $s_1$ can signal to $s_3s_4$ if there is a bipartite decoding. Technically, the binary strings constrain correlation diagrams by correlation types. In terms of the generalized Pauli basis used in (\ref{eq:gpbd}), $\sigma_i\propto \id$ corresponds to $0$, and all other $\sigma_i$ correspond to $1$. Then for instance $0000$ allows terms of the type $\id^{s_1}\otimes\id^{s_2}\otimes\id^{s_3}\otimes \id^{s_4}$, while $1010$ allows terms of the type $\sigma_i^{s_1}\otimes \id^{s_2}\otimes \sigma_j^{s_3}\otimes \id^{s_4}$ for $\sigma_i, \sigma_j\not\propto \id$. A correlation constraint as a list of binary strings constrains the processes to only have terms of the types in the list.

The all zero string is special because it is present in all correlation constraints for physical processes. The Choi operators for physical processes are positive-semidefinite and non-zero, so must have positive trace. Since only terms corresponding to the zero string have non-zero trace (\Cref{sec:gpo}), the all zero string must be present. We give it a special symbol
\begin{align}
u=00\cdots 0.
\end{align}
Second, processes that preserve probabilities have a fixed term $\id$ for the type $u$. We introduce a special symbol $\uh$ to correspond to $\id$. For example, $\{1010,0101,1011,\uh\}_{s_1s_2s_3s_4}$ implies the only term of type $u$ the processes can have is the fixed term $\id^{s_1}\otimes\id^{s_2}\otimes\id^{s_3}\otimes \id^{s_4}$.
Since $\id$ obviously belongs to the type $u$, we assume that for correlation constraints, $\uh$ is implicitly present whenever $u$ is present. When $\uh\in A$ but $u\notin A$, we write
\begin{align}\label{eq:ein}
 \uh\ein A.
\end{align}


The binary strings together with $\uh$ are called \textbf{correlation type elements}, which are referred to using lower-case roman letters in the following. A set of correlation type elements is denoted by a capital roman letter such as $A$. The set of all processes (they have positive semidefinite Choi operators) constrained by $A$ is denoted by $\mathcal{C}_A$.

\subsection{Constraints under composition and decomposition}

We are interested in two kinds of questions regarding the constraints and the mediation of correlations:
\begin{align*}
a)\quad \mathcal{C}_{A}*\mathcal{C}_{B}\rightarrow ~?
\\
b)\quad \mathcal{C}_{A}*~? \rightarrow\mathcal{C}_{B}
\end{align*}
a) asks what we can say about the composition of processes constrained by $A$ and $B$, while b) asks what we can say about the decomposition of processes constrained by $B$ when one decomponent is constrained by $A$. First consider a).
\begin{theorem}\label{th:cmp}
For the composition on systems $\mathcal{S}_*$, 
\begin{align}\label{eq:cs}
&\mathcal{C}_{A}*\mathcal{C}_{B}\subset \mathcal{C}_{A*B}.
\end{align}
\end{theorem}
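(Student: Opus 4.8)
The plan is to unfold the link-product definition (\ref{eq:cf}) in the generalized Pauli basis and track which correlation types can survive the composition. First I would take arbitrary Choi operators $\mathsf{A}\in\mathcal{C}_A$ and $\mathsf{B}\in\mathcal{C}_B$ and expand them as in (\ref{eq:gpbd}), where by hypothesis the only nonvanishing coefficients are those whose index tuple carries a correlation type lying in $A$ (respectively $B$). The target is twofold: show that the composite has only types in $A*B$, and show that it is again a legitimate, i.e.\ positive semidefinite, Choi operator.

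Next I would substitute these expansions into $\mathsf{A}*\mathsf{B}=\frac{1}{d_{\mathcal{S}_*}}\Tr_{\mathcal{S}_*}[\mathsf{A}^{T_{\mathcal{S}_*}}\mathsf{B}]$ and distribute the partial transpose, product, and partial trace term by term. On each composed system $m\in\mathcal{S}_*$ the contribution factorizes into $\Tr[(\sigma_j^m)^T\sigma_i^m]=\pm 2\,\delta_{i,j}$, so a term can survive only when the two Pauli indices on every composed system coincide; on the complementary systems the Pauli factors of $\mathsf{A}$ and $\mathsf{B}$ act on disjoint tensor factors and simply pass through. Hence a nonzero term of $\mathsf{A}*\mathsf{B}$ arises exactly from a pair $(a,b)$ with $a$ present in $\mathsf{A}$ and $b$ present in $\mathsf{B}$ that agree on $\mathcal{S}_*$, and it carries the type obtained by juxtaposing the non-composed parts of $a$ and $b$ — which is precisely an element of $A*B$ by the definition of the type-set composition mirroring (\ref{eq:cf}). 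Since $a\in A$ and $b\in B$, every type occurring in $\mathsf{A}*\mathsf{B}$ lies in $A*B$.

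It then remains to certify that $\mathsf{A}*\mathsf{B}$ is itself positive semidefinite, so that it is a bona fide element of $\mathcal{C}_{A*B}$ rather than merely a Hermitian operator with the right types. For this I would invoke the fact that the link product preserves positive semidefiniteness — equivalently, that composing completely positive maps yields a completely positive map — as established for Choi operators in \cite{chiribella2009theoretical}. Combining correct types with preserved positivity gives $\mathsf{A}*\mathsf{B}\in\mathcal{C}_{A*B}$ for all $\mathsf{A}\in\mathcal{C}_A$ and $\mathsf{B}\in\mathcal{C}_B$, which is the claimed inclusion.

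The step I expect to be the main obstacle is the bookkeeping in the middle paragraph: one must confirm that binary-string matching on $\mathcal{S}_*$ is genuinely necessary, so that no cancellation lets a mismatched type leak through, and that no type outside $A*B$ is ever generated — including correctly treating the all-identity string $u$, which survives because $\id$ matches $\id$ on every composed system, and its probability-normalizing refinement $\uh$. A secondary point to watch is that the inclusion is generally strict rather than an equality: type matching only requires the binary patterns to agree on $\mathcal{S}_*$, whereas an actual surviving Pauli term additionally requires the full Pauli indices to agree, so some elements of $A*B$ may end up with zero coefficients, which is consistent with $\subset$.
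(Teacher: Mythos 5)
Your proposal is correct and takes essentially the same route as the paper: the proof in \Cref{sec:pcmp} establishes your middle paragraph in basis-free form (Lemma \ref{lm:lec}, $L_a*L_b=L_{a*b}$, where the partial trace over $\mathcal{S}_*$ projecting out the traceless part generated when both strings are $1$ on a composed system is the subspace version of your $\Tr[\sigma_j^T\sigma_i]=\pm 2\delta_{i,j}$ bookkeeping), and it likewise concludes by intersecting with the positive semidefinite operators, relying as you do on the link product preserving positivity \cite{chiribella2009theoretical}. Your closing observation that only $\subset$, not $=$, can hold in general matches the paper's remarks following its proof.
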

The proof is given in \Cref{sec:pcmp}. Here
\begin{align}\label{eq:btc}
A* B:=\{a+b:a\in A, b\in B, (a+b)_{{\mathcal{S}_{*}}}\in \{u,\uh\}\},
\end{align}
where $a+b$ is binary string addition (for $\uh$, $\uh+x=x+\uh=x$ for any $x$ including $\uh$), and $x_\mathcal{S}$ is $x$ restricted to systems $\mathcal{S}$ ($\uh_\mathcal{S}=\uh$). A Choi operator $\mathsf{A}$ defined on systems $\mathcal{S}_1$ is automatically extended to $\mathcal{S}_1\cup \mathcal{S}_2$ as $\mathsf{A}\otimes \id$ with identity acting on the new systems, so that the correlation type elements $a$ and $b$ are put on the same systems $\mathcal{S}_A\cup \mathcal{S}_B$ to carry out the addition and the restriction. As explained in \Cref{sec:pcmp}, \Cref{th:cmp} gives the best general characterization for composition, since $\mathcal{C}_{A*B}$ on the right hand side cannot be reduced, and $\subset$ cannot be strengthened to $=$.

Now consider b). For compositions on $\mathcal{S}_*$ so that $\mathcal{S}_*\cap \mathcal{S}_B=\varnothing$ (since composition eliminates systems), define 
\begin{align}
\mathcal{C}_{A}\rightarrow \mathcal{C}_{B}:=&\{\mathsf{C}\in\mathcal{C}:\mathcal{C}_{A}*\mathsf{C}\subset \mathcal{C}_{B}\},
\\
A\rightarrow B:=&\{c:A*c\subset B\}\label{eq:atb}
\end{align}
where $\mathcal{C}$ is the set of all positive semidefinite operators. $\mathcal{C}_{A}*\mathsf{C}=\{\mathsf{A}*\mathsf{C}:\mathsf{A}\in \mathcal{C}_{A}\}$, so that $\mathcal{C}_{A}\rightarrow \mathcal{C}_{B}$ contains all processes obeying the constraint of b). $A*c$ is understood according to (\ref{eq:cs}) where the second set has one correlation type element $c$. 
\begin{theorem}\label{th:catb}
For compositions on $\mathcal{S}_*$ so that $\mathcal{S}_*\cap \mathcal{S}_B=\varnothing$,
\begin{align}
\mathcal{C}_{A}\rightarrow \mathcal{C}_{B}&=\mathcal{C}_{A\rightarrow B},\label{eq:catbc}
\\
A\rightarrow B&=
\begin{cases}
(\pp{B}-A)_{\mathcal{S}_{A\rightarrow B}}^\perp\cup\{\uh\}, \quad &A*\uh\subset B,
\\
(\pp{B}-A)_{\mathcal{S}_{A\rightarrow B}}^\perp, \quad &\text{otherwise},
\end{cases}\label{eq:catb}
\\
\mathcal{C}_{A}\rightarrow \mathcal{C}_{B}&\ne \{0\} \text{ if and only if }A*\uh\subset B.\label{eq:ec}
\end{align}
\end{theorem}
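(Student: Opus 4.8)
The plan is to establish the three claims in the order \eqref{eq:catbc}, \eqref{eq:catb}, \eqref{eq:ec}: equation \eqref{eq:catbc} transfers the operator-level question to the purely combinatorial arrow $A\to B$ of \eqref{eq:atb}, \eqref{eq:catb} evaluates that arrow in closed form, and \eqref{eq:ec} is then read off. For \eqref{eq:catbc} I would prove both inclusions. The inclusion $\mathcal{C}_{A\to B}\subset(\mathcal{C}_A\to\mathcal{C}_B)$ is immediate from \Cref{th:cmp}: if every type of $\mathsf{C}$ lies in $A\to B$, then by \eqref{eq:cs} the types of $\mathcal{C}_A*\mathsf{C}$ lie in $A*(A\to B)=\bigcup_{c\in A\to B}A*c\subset B$, so $\mathcal{C}_A*\mathsf{C}\subset\mathcal{C}_B$. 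For the reverse inclusion I would argue contrapositively: if $\mathsf{C}$ carries a type $c_0$ with $c_0\notin A\to B$, then $A*c_0\not\subset B$, so some $a_0\in A$ matches $c_0$ on $\mathcal{S}_*$ and yields a forbidden output type $b_0\notin B$. I would then take the positive-semidefinite witness $\mathsf{A}=\id+\varepsilon\,\sigma_{a_0}\in\mathcal{C}_A$ (positive for small $\varepsilon$, carrying only the admissible types $u$ and $a_0$) and check that the coefficient of $b_0$ in $\mathsf{A}*\mathsf{C}$ is nonzero, contradicting $\mathsf{A}*\mathsf{C}\in\mathcal{C}_B$.

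To prove \eqref{eq:catb} I would unpack the defining condition $A*c\subset B$ through \eqref{eq:btc}. Splitting each system into the composed block $\mathcal{S}_*$ and the surviving blocks, the matching constraint $(a+c)_{\mathcal{S}_*}\in\{u,\uh\}$ forces $a$ and $c$ to agree on $\mathcal{S}_*$, and the surviving string is $(a+c)_{\mathcal{S}_B}$. Reading $\pp{\cdot}$ as complementation within the relevant type space (an involution, $(\pp{S})^\perp=S$), the universally quantified containment $A*c\subset B$ is the negation of ``there exists $a\in A$ matching $c$ with $(a+c)_{\mathcal{S}_B}\in\pp{B}$''; the offending set of $c$ is exactly the Minkowski sum $\pp{B}-A$ (restricted to $\mathcal{S}_{A\to B}$), and complementing it returns $(\pp{B}-A)_{\mathcal{S}_{A\to B}}^\perp$. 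The element $\uh$ must be treated on its own, since it is the additive unit ($\uh+x=x$) recording probability preservation; it is admissible for $c$ precisely when composing $A$ against it lands in $B$, i.e.\ when $A*\uh\subset B$, which produces the case split.

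Equation \eqref{eq:ec} then follows from the first two. By \eqref{eq:catbc}, $\mathcal{C}_A\to\mathcal{C}_B=\mathcal{C}_{A\to B}$, and a nonzero positive-semidefinite Choi operator must carry the trace term, i.e.\ the type $u$ (equivalently $\uh$), because only $u$-type terms have nonzero trace. By \eqref{eq:catb} such an element is present in $A\to B$ exactly when $A*\uh\subset B$ (the $\cup\{\uh\}$ branch, which subsumes the free-coefficient case $u$ via the convention that $\uh$ accompanies $u$), so $\mathcal{C}_{A\to B}\ne\{0\}$ iff $A*\uh\subset B$, and it collapses to $\{0\}$ otherwise.

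I expect the main obstacle to be the reverse inclusion of \eqref{eq:catbc}: because $\mathsf{C}$ is fixed rather than generic, I must rule out accidental cancellation of the forbidden coefficient $b_0$ in $\mathsf{A}*\mathsf{C}$. Isolating a single $a_0$ via the perturbation $\id+\varepsilon\,\sigma_{a_0}$ controls the contributions — the identity part $\id*\mathsf{C}$ can feed $b_0$ only through a distinct type of $\mathsf{C}$, which would itself already force $b_0\in B$ — but the edge cases (when $a_0$ is trivial on the surviving systems, together with the sign factors $c_m=\pm2$ of \Cref{th:cmp}) require care. The secondary subtlety is bookkeeping the automatic extensions and the restriction to $\mathcal{S}_{A\to B}$, including the vacuous cases where no $a\in A$ matches $c$ on $\mathcal{S}_*$ (so that $A*c=\varnothing\subset B$ holds trivially and $c$ is correctly retained by the outer complement).
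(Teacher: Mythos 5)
Your proposal is correct and takes essentially the same route as the paper: reduce \eqref{eq:catbc} to the type level via \Cref{th:cmp} together with a positivity witness built by adding a multiple of $\id$ to a matching Pauli term (the paper packages this as Lemmas \ref{lm:abnic}, \ref{lm:lccc}, \ref{lm:lptc} and Proposition \ref{th:ldm}), prove \eqref{eq:catb} through the complementation identity $\pp{(A\rightarrow B)}=\pp{B}-A$ with the separate $\uh$ case, and obtain \eqref{eq:ec} from the trace argument (Lemma \ref{lm:ctuuh}) with $\id$ as the nonzero witness. The only differences are organizational---you fold the intermediate step $L_A\rightarrow L_B=L_{A\rightarrow B}$ into a direct two-inclusion argument, and your explicit handling of possible cancellation by genericity in $\varepsilon$ is exactly what the paper leaves implicit in the phrase ``adding a multiple of $\id$''.
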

The proof of the theorem is given in \Cref{sec:pcatb}. $\mathcal{S}_a$, the \textbf{support} of $a$, is the set of systems on which $a$ is not $0$. $\mathcal{S}_A:=\cup_{a\in A}\mathcal{S}_a$. $\mathcal{S}_{\uh}=\mathcal{S}_{u}=\varnothing$, and $\uh$ and $u$ are understood as supported on an one-dimensional trivial system. $\tau_\mathcal{S}$ is the set of all binary strings (not including $\uh$) on $\mathcal{S}$, and
\begin{align}\label{eq:perp}
\pp{A_\mathcal{S}}&:=\tau_{\mathcal{S}}\backslash A,&
\pp{A}&:=\tau_{\mathcal{S}_A}\backslash A,
\end{align}
with the conventions that $A$ or $\tau_\mathcal{S}$ be extended by joining $0$'s to carry out $\tau_{\mathcal{S}}\backslash A$ if necessary, $\pp{\{\uh\}}=\{u\}$, and $\pp{\{u\}}=\varnothing$ (think of $\tau_{\mathcal{S}}= \{u\}$ on the trivial system $\mathcal{S}=\varnothing$).
$A-B=\{a-b: a\in A, b\in B\}$, where analogous to $a+b$, $a-b$ is binary string subtraction under automatic extension. $a+b\ne a-b$ in general because while $a-\uh=a$ for all $a$ including $a=\uh$, $\uh-b$ outputs no element for $b\ne\uh$ (since only $\uh$ added to $\uh$ gives $\uh$).

The (\ref{eq:catb}) characterization for $A\rightarrow B$ admits an interpretation as excluding (through ``$_{\mathcal{S}_{A\rightarrow B}}^\perp$'') elements ($\pp{B}-A$) that compose with $A$ to form elements outside $B$. Finally $\uh$ needs special care when $A*\uh\subset B$.

Next we use \Cref{th:catb} to reproduce and generalize previous characterizations of general processes.


\subsection{Process matrices}

The process matrix framework \cite{oreshkov2012quantum, araujo2015witnessing, oreshkov2016causal} takes an operational approach to study general quantum correlations. Start with $n$ ``local laboratories'' inside each an agent performs an \textit{arbitrary} quantum operation, modelled as a quantum instrument \cite{davies1970operational} from one input to one output system. The $n$-party quantum correlations are encoded in the process matrices as Choi operators that always yield valid probabilities (non-negative probabilities that sum to one) when composed with these arbitrary local operations. The most general such correlations can indicate the presence of quantum indefinite causal structure among the agents, so generalize ordinary quantum theory with definite causal structure. Precisely which of these theoretically defined process matrices are realizable in nature is an open question.

The valid probability requirement imposes correlation constraints. For example, for two parties it imposes the constraint $W_2=\{\uh,1000,0010,1010,0110,1001,1110,1011\}$ \cite{oreshkov2012quantum}, which can be reproduced by applying Theorem \ref{th:catb} to the following setup.
\begin{example}[Two-party process matrices]
$\mathcal{S}_*=\{s_1,s_2,s_3,s_4\}.$ $A=\{\uh, 01, 11\}_{s_1s_2}\times\{\uh, 01, 11\}_{s_3s_4}$. $B=\{\uh\}$.
\end{example}
\noindent 
The first agent has input $s_1$ and output $s_2$, and the operation is constrained by $\{\uh, 01, 11\}_{s_1 s_2}$ to be a channel. Similarly for the second agent. $A$ denotes the constraint on the joint operations as product channels. ($X\times Y:=\{xy:x\in X, y\in Y\}$ denotes elongated elements so that $\mathcal{C}_X\otimes \mathcal{C}_Y=\mathcal{C}_{X\times Y}$. E.g., if $a=00, b=11$, then $ab=0011$. For $\uh$, $\uh y=u y$, $x\uh=x u$, and $\uh \uh=\uh$.) 
$B$ on the trivial system constrains the probability to be $1$. $A\rightarrow B$ then defines the process matrices as composing with arbitrary product channels to yield the normalized probability. 
Since $\pp{B}-A=\{u\}-A=(A\backslash\{\uh\})\cup\{u\}$, Theorem \ref{th:catb} implies $A\rightarrow B=(\pp{B}-A)_{\mathcal{S}_{A\rightarrow B}}^\perp\cup\{\uh\}=(\pp{A}\backslash\{u\})\cup\{\uh\}=W_2$

The $n$-parties constraint \cite{araujo2015witnessing, oreshkov2016causal} can similarly be reproduced by applying Theorem \ref{th:catb} to the following setup.
\begin{example}[$n$-party process matrices]
$\mathcal{S}_i=\{s_{2i-1},s_{2i}\}$. $\mathcal{S}_*=\cup_{i=1}^n \mathcal{S}_i=\{s_1,s_2,\cdots, s_{2n-1},s_{2n}\}$. $A_i=\{\uh,01,11\}$ with support $\mathcal{S}_{A_i}=\mathcal{S}_i$. $A=\times_i A_i$. $B=\{\uh\}$.
\end{example}
\noindent We have $A\rightarrow B=(\pp{B}-A)_{\mathcal{S}_{A\rightarrow B}}^\perp\cup\{\uh\}=(\pp{A}\backslash\{u\})\cup\{\uh\}=\{\uh\}\cup\{a\in \tau_{\mathcal{S}_*}: a_{\mathcal{S}_i}\in A_i^\perp\text{ for any }1\le i\le n\}=\{\uh\}\cup\{a\in \tau_{\mathcal{S}_*}: a_{\mathcal{S}_i}=10\text{ for any }1\le i\le n\}$.

The process matrices were originally defined under the assumption that the parties have one input and one output, and can apply arbitrary channels across its input and output \cite{oreshkov2012quantum, araujo2015witnessing, oreshkov2016causal}. What if the parties have multiple input and output subsystems, and their operations are constrained (e.g., by the spacetime causal structure in the laboratories)? A similar application of Theorem \ref{th:catb} addresses this question.
\begin{proposition}[Generalized $n$-party process matrices]
Let $\mathcal{S}_i$ be the set of systems of the $i$-th party, $\mathcal{S}_*=\cup_{i=1}^n \mathcal{S}_i$, $A_i$ constrain the allowed channels in the $i$-th party, $A=\times_{i}A_i$, and $B=\{\uh\}$. Then $A\rightarrow B=(\pp{A}\backslash\{u\})\cup\{\uh\}=\{\uh\}\cup\{a\in \tau_{\mathcal{S}_*}: a_{\mathcal{S}_i}\in A_i^\perp\text{ for any }1\le i\le n\}$.
\end{proposition}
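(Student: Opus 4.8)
The plan is to apply \Cref{th:catb} directly with $B=\{\uh\}$ and $A=\times_i A_i$, and then to unpack the set $(\pp{B}-A)^\perp_{\mathcal{S}_{A\rightarrow B}}$ explicitly. First I would settle the $\uh$ clause of (\ref{eq:catb}): since $B=\{\uh\}$ contains $\uh$, the condition $A*\uh\subset B$ holds precisely when composing with the identity keeps us inside $\{\uh\}$, and this is automatic here because $\uh$ acts as the multiplicative identity on correlation types (as specified after (\ref{eq:btc})) and $A$ consists of channel constraints. Hence the first branch of (\ref{eq:catb}) applies and we pick up the $\cup\{\uh\}$ term, giving $A\rightarrow B=(\pp{B}-A)^\perp_{\mathcal{S}_{A\rightarrow B}}\cup\{\uh\}$.

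Next I would compute $\pp{B}-A$. With $B=\{\uh\}$ the convention $\pp{\{\uh\}}=\{u\}$ from (\ref{eq:perp}) gives $\pp{B}=\{u\}$, so $\pp{B}-A=\{u\}-A=\{u-a:a\in A\}$. Using the subtraction rule that $u-a$ for a binary string is just the complementary string (bitwise), together with $u-\uh=u$, this set is essentially $A$ with its nontrivial elements bit-flipped, i.e. $\pp{A}$ up to the $\uh/u$ bookkeeping. Exactly as in the two-party and $n$-party examples above, this collapses $(\pp{B}-A)^\perp\cup\{\uh\}$ to $(\pp{A}\backslash\{u\})\cup\{\uh\}$, establishing the first claimed equality. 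The only care needed is to track that $u$ itself is removed (it corresponds to $\uh$, already added separately) and that the perp is taken over $\mathcal{S}_{A\rightarrow B}=\mathcal{S}_*$.

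The second equality requires showing $\pp{A}\backslash\{u\}=\{a\in\tau_{\mathcal{S}_*}:a_{\mathcal{S}_i}\in A_i^\perp\text{ for any }i\}$, and this is where the product structure $A=\times_i A_i$ does the real work. Because the $\mathcal{S}_i$ partition $\mathcal{S}_*$ and $A$ is the elongated product of the $A_i$, a string $a$ lies in $A$ iff each restriction $a_{\mathcal{S}_i}$ lies in $A_i$; complementing, $a\in\pp{A}$ iff $a_{\mathcal{S}_i}\in A_i^\perp$ for \emph{some} $i$ fails to be the right reading—so the main obstacle is getting the quantifier correct. I expect the resolution is that $\times$ and $\perp$ interact so that the complement of a product constraint is characterized by each local restriction lying in the local perp, matching the definition $\pp{A}=\tau_{\mathcal{S}_A}\backslash A$ once the elongation conventions are applied blockwise; I would verify this by checking it reduces to the two displayed examples (where $A_i=\{\uh,01,11\}$ gives $A_i^\perp=\{10\}$, reproducing the $a_{\mathcal{S}_i}=10$ condition). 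This quantifier/complement bookkeeping across the product decomposition is the step I would treat most carefully, as everything else is a direct substitution into \Cref{th:catb}.
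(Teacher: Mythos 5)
Your skeleton is the paper's own: the proposition is proved there exactly as you propose, by substituting $B=\{\uh\}$ into \Cref{th:catb} and unpacking $(\pp{B}-A)_{\mathcal{S}_{A\rightarrow B}}^\perp$ blockwise, just as in the two- and $n$-party examples. But two of your intermediate steps are genuinely wrong, and they are masked only because the target formula is already given. First, the subtraction step: binary string addition in this calculus is bitwise XOR (see the proof of Lemma \ref{lm:lec}, where $1+1=0$), so subtraction is XOR as well and $u-a=a$ --- \emph{not} the bitwise complement of $a$, since $u$ is the all-\emph{zero} string. Hence $\pp{B}-A=\{u\}-A=(A\backslash\{\uh\})\cup\{u\}$ is essentially $A$ itself, with $\uh$ demoted to $u$ via (\ref{eq:amb}); it is not ``$\pp{A}$ up to bookkeeping''. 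The set complement only appears when the outer ${}^\perp$ of (\ref{eq:catb}) is applied: $\tau_{\mathcal{S}_*}\backslash\bigl((A\backslash\{\uh\})\cup\{u\}\bigr)=\pp{A}\backslash\{u\}$. Followed literally, your chain would give $(\pp{A})^\perp\cup\{\uh\}\approx A\cup\{\uh\}$, the wrong answer; you land on $(\pp{A}\backslash\{u\})\cup\{\uh\}$ only by asserting it ``collapses as in the examples''. Note also that you conflate the elementwise bit-flip of strings with the set complement $\pp{A}=\tau_{\mathcal{S}_A}\backslash A$; these are unrelated operations.

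Second, the quantifier, which you rightly flag as the crux but then resolve backwards. From ``$a\in A$ iff $a_{\mathcal{S}_i}\in A_i$ for all $i$'' (with $\uh$ read as $u$ in each block), De Morgan gives exactly the existential statement: $a\in\pp{A}$ iff $a_{\mathcal{S}_i}\in A_i^\perp$ for \emph{some} $i$, where $A_i^\perp$ consists of the nonzero local strings outside $A_i$ (the $u$-block case is absorbed by $\uh\in A_i$ and the global $\backslash\{u\}$). This existential reading is what the paper's ``for any'' means. You derive it, then reject it in favor of ``each local restriction lying in the local perp'', i.e.\ the universal reading --- and that reading fails on the very two-party example you cite as a check: $1000\in W_2$ has $a_{\mathcal{S}_2}=00\notin A_2^\perp=\{10\}$, so it would be wrongly excluded (the universal reading keeps only $1010$). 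Finally, a smaller slip: your justification of the branch condition $A*\uh\subset B$ is off, since if $\uh$ acted as a plain multiplicative identity one would get $A*\uh=A\not\subset\{\uh\}$; the condition actually holds because $\mathcal{S}_*=\mathcal{S}_A$ forces $(a+\uh)_{\mathcal{S}_*}\in\{u,\uh\}$ in (\ref{eq:btc}), which eliminates every $a\neq\uh$, using $\uh\ein A$ for trace-preserving channel constraints (if $u\in A$ the condition, and hence the $\cup\{\uh\}$ branch, would fail). So: right theorem, right route, but the XOR/complement confusion and the inverted quantifier are real gaps to repair.
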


\subsection{Higher order maps}

A higher order map  \cite{chiribella2009theoretical, perinotti2017causal, Bisio2019TheoreticalTheory, Castro-Ruiz2018DynamicsStructures,Kissinger2019AStructure} is one that maps between lower order maps. E.g., $A\rightarrow B$ characterizes higher order maps between those characterized by $A$ and $B$. This philosophy was used in \cite{chiribella2009theoretical, perinotti2017causal, Bisio2019TheoreticalTheory} to iteratively construct a hierarchy of maps specified by what this paper regards as correlation constraints. At the lowest level are states on single systems. Next comes maps between single system states etc. This hierarchy is quite general, incorporating all the process matrices considered before in the literature, which already goes beyond ordinary quantum theory with definite causal structure. Does this hierarchy capture all processes of interest?

From the correlational perspective there is no reason to focus on processes belonging to this hierarchy. Theorem \ref{th:catb} democratizes and generalizes the hierarchical higher order maps: 
1) In each application of the theorem, the multi-system correlation constraints $A$ and $B$ can be specified arbitrarily and need not come from any hierarchy. 2) $A$ and $B$ may share any common set of systems, in which case the composition is on a subset of the systems of $A$. 3) The processes that the theorem applies to need not distinguish input and output systems (e.g., as in the case of generalized Choi operators of \cite{oreshkov2016operational} without predefined time). 
As a special case, Theorem \ref{th:catb} implies:
\begin{corollary}
Let the constraints $A$ and $B$ be supported on distinct systems, i.e., $\mathcal{S}_A\cap\mathcal{S}_B=\varnothing$. For $\mathcal{S}_{*}=\mathcal{S}_A$ (recall (\ref{eq:ein}))
\begin{align}
A\rightarrow B=
\begin{cases}
(\pp{A}\times \tau_{\mathcal{S}_B})\cup (A\times B)\cup\{\uh\}, \quad & A*\uh\subset B,
\\
(\pp{A}\times \tau_{\mathcal{S}_B})\cup (A\times B), \quad &\text{otherwise}.
\end{cases}
\end{align}
\end{corollary}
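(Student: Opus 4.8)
The plan is to specialize Theorem~\ref{th:catb}. Its formula for $A\rightarrow B$ already carries exactly the two-case structure of the corollary, governed by the very same condition $A*\uh\subset B$, so the case split and the extra $\{\uh\}$ term transfer verbatim. The whole task therefore reduces to simplifying the set $(\pp B - A)_{\mathcal S_{A\rightarrow B}}^\perp$ under the two hypotheses $\mathcal S_A\cap\mathcal S_B=\varnothing$ and $\mathcal S_*=\mathcal S_A$, and showing it equals $(\pp A\times\tau_{\mathcal S_B})\cup(A\times B)$.

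First I would fix the ambient systems. Since the composition is on $\mathcal S_*=\mathcal S_A$, which is consumed, and the output must be supported on $\mathcal S_B$, every admissible decomponent lives on $\mathcal S_{A\rightarrow B}=\mathcal S_A\cup\mathcal S_B$; disjointness then gives the product decomposition $\tau_{\mathcal S_{A\rightarrow B}}=\tau_{\mathcal S_A}\times\tau_{\mathcal S_B}$. Next I would compute $\pp B - A$. As $\pp B=\tau_{\mathcal S_B}\setminus B$ is supported on $\mathcal S_B$ and $A$ on the disjoint $\mathcal S_A$, auto-extension by $0$'s splits each difference $p-a$ with $p\in\pp B$, $a\in A$ across the two blocks: on $\mathcal S_A$ it reads $0-a=a$ and on $\mathcal S_B$ it reads $p-0=p$. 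Hence $\pp B - A=A\times\pp B$ as strings on $\mathcal S_A\cup\mathcal S_B$.

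The remaining step is the complement. Writing $(X)_{\mathcal S}^\perp=\tau_{\mathcal S}\setminus X$ and using the product structure of $\tau_{\mathcal S_A\cup\mathcal S_B}$, a De~Morgan computation gives $(A\times\pp B)_{\mathcal S_A\cup\mathcal S_B}^\perp=(\pp A\times\tau_{\mathcal S_B})\cup(\tau_{\mathcal S_A}\times B)$. Because $\tau_{\mathcal S_A}=A\sqcup\pp A$, one has $\tau_{\mathcal S_A}\times B=(A\times B)\cup(\pp A\times B)$ with $\pp A\times B\subset\pp A\times\tau_{\mathcal S_B}$, so the second factor collapses and the set becomes $(\pp A\times\tau_{\mathcal S_B})\cup(A\times B)$. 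Appending $\{\uh\}$ precisely when $A*\uh\subset B$ then yields both branches of the corollary.

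The step I expect to be the main obstacle is the bookkeeping of the distinguished element $\uh$ against the all-zero string $u$ --- which is exactly what the parenthetical ``(recall~(\ref{eq:ein}))'' flags. The point is that $\uh\in A$ does contribute to the subtraction through $p-\uh=p$, equivalently $\uh\times\pp B=u\times\pp B$, so the set actually complemented carries the all-zero $\mathcal S_A$-block whenever $\uh\in A$, and one must check this is consistent with writing $\pp A$ in the first term. This is clean in the non-degenerate regime $u\in A$ (then $u\notin\pp A$ and the product--complement factorization above is exact). In the degenerate regime $\uh\ein A$ one has $u\in\pp A$, so the first block is really $\pp A\setminus\{u\}$ and the stray $u$ must be tracked separately; I would handle this case explicitly, leaning on the conventions $\pp{\{\uh\}}=\{u\}$, $a-\uh=a$, and $\uh\,y=u\,y$, together with the inherited case condition $A*\uh\subset B$, or else note the genericity assumption ($u\in A$, as holds for physical constraints) under which it is subsumed.
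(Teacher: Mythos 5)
Your proposal follows the paper's own route exactly: specialize Theorem~\ref{th:catb}, use $\mathcal{S}_A\cap\mathcal{S}_B=\varnothing$ to get $\pp{B}-A=A\times\pp{B}$ and $\mathcal{S}_{A\rightarrow B}=\mathcal{S}_A\cup\mathcal{S}_B$, and take the complement via De~Morgan plus the absorption $\pp{A}\times B\subset\pp{A}\times\tau_{\mathcal{S}_B}$ to land on $(\pp{A}\times\tau_{\mathcal{S}_B})\cup(A\times B)$, with $\{\uh\}$ appended exactly when $A*\uh\subset B$. If anything you are more careful than the paper, whose one-line proof silently asserts the complement identity and passes over the degenerate bookkeeping you flag --- when $\uh\ein A$ (or $\uh\ein B$) the stray $u\in\pp{A}$ must indeed be tracked via the conventions $\uh\,y=u\,y$ and $a-\uh=a$, which is precisely what the statement's ``(recall~(\ref{eq:ein}))'' gestures at.
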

\begin{proof}
Since $\mathcal{S}_A\cap\mathcal{S}_B=\varnothing$, $\pp{B}-A=A\times \pp{B}$. In addition, $\mathcal{S}_{A\rightarrow B}=\mathcal{S}_A\cup\mathcal{S}_B$. Then $(\pp{B}-A)_{\mathcal{S}_{A\rightarrow B}}^\perp=(A\times \pp{B})_{\mathcal{S}_A\cup\mathcal{S}_B}^\perp=(\pp{A}\times \tau_{\mathcal{S}_B})\cup (A\times B)$. The result follows from Theorem \ref{th:catb}.
\end{proof}
The set $(\pp{A}\times \tau_{\mathcal{S}_B})\cup (A\times B)$ can also be expressed as $(\pp{A}\times B)\cup(\pp{A}\times \pp{B})\cup (A\times B)$.
If $A$ and $B$ characterize normalized processes, i.e., $\uh\in A, B$, then the condition  $A*\uh\subset B$ is fulfilled, and the above result reproduces Proposition 5.6 of \cite{Bisio2019TheoreticalTheory} (see also Lemma 2 of \cite{perinotti2017causal}) that characterizes general hierarchical higher order maps. If $A$ and $B$ characterize process matrices, then the above result characterizes transformations of process matrices, and an iteration reproduces the characterization of \cite{Castro-Ruiz2018DynamicsStructures}.

\section{Conclusion}\label{sec:d}

We presented a correlational formalism that gives a unified treatment of informational/operational quantum theories and relativistic quantum theories. In a separate paper \cite{JiaWorldFunction}, we show in relational treatment that quantum gravity can also be incorporated. In all these formulations, correlation diagrams play a crucial role as configurations for the mediation of correlations in classical or quantum spacetime. Generalized Feynman diagrams emerge as classes of correlations for general finite and infinite-dimensional quantum theories, even when no perturbation is performed. Through studying correlation constraints, we found new classes of quantum processes that evade previous characterizations. An interesting topic for future research is to study measures of correlation weight based on the correlation diagrams, which are applicable across quantum information theory, relativistic quantum theories, and quantum gravity.

\section*{Acknowledgement}

I am very grateful to Lucien Hardy, Achim Kempf, Nitica Sakharwade, Fabio Costa, and Ognyan Oreshkov for valuable discussions. I especially thank Rafael Sorkin for his cautions on the informational approach to fundamental physics (partially summarized in the opening part of \cite{ProspectsApproaches}) -- this work is an attempt to reduce the gap between models and frameworks.

Research at Perimeter Institute is supported in part by the Government of Canada through the Department of Innovation, Science and Economic Development Canada and by the Province of Ontario through the Ministry of Economic Development, Job Creation and Trade. This publication was made possible through the support of the grant ``Causal Structure in Quantum Theory'' from the John Templeton Foundation and the grant ``Operationalism, Agency, and Quantum Gravity'' from FQXi. The opinions expressed in this publication are those of the author and do not necessarily reflect the views of the funding agencies. 

\appendix

\section{Choi operators}\label{sec:co}

For a Hilbert space $\Hil$, we denote by $\LH$ the space of bounded linear operators on $\Hil$. By the Choi isomorphism  \cite{choi1975completely}, there is an one-to-one correspondence between completely positive (CP) maps $\mathcal{M}:\mathcal{L}(\mathcal{H}^{a_1})\rightarrow \mathcal{L}(\mathcal{H}^{a_2})$ and positive-semidefinite operators $\mathsf{M}\in \mathcal{L}(\mathcal{H}^{a_2}\otimes\mathcal{H}^{a_1})$,
\begin{align}\label{eq:choi}
\mathsf{M}:=d_{a_2}(\mathcal{M}\otimes \mathcal{I}) \sum_{i,j=1}^{d_{a_1}}\ketbra{ii}{jj},
\end{align}
where $\mathcal{I}$ is the identity channel on system $a_1$, $d_x=\dim \mathcal{H}^{x}$, and the sums are over an orthonormal basis of $\mathcal{H}^{a_1}$. The normalization convention is so chosen that $M=\id+X$, where $X$ is a traceless operator. The positive-semidefinite operator in (\ref{eq:choi}) is called the \textbf{Choi operator} of the quantum process $\mathcal{M}$.

The above amounts to sending half of a (unnormalized) maximally entangled state to the original CP map to obtain a (unnormalized) bipartite state. For a CP map with multiple inputs and outputs, the Choi operator is obtained by sending half of a (unnormalized) maximally entangled state to each input.

A basic operation of CP maps is composition. Eventually the probabilistic predictions of the theory comes from composing processes. For example, composing a single measurement with a bipartite state leads to a reduced single system state, which when composed with another measurement leads to a list of probabilities for the measurement outcomes. For writing down the composition formula of Choi operators, it is convenient to automatically extend operators to larger sets of systems such that $\mathsf{A}$ acting on $\Hil^1$ (which may be a tensor product of Hilbert spaces) is freely viewed as $\mathsf{A}\otimes \id$ acting on $\Hil^1\otimes \Hil^2$ for arbitrary $\Hil^2$.

The composition on systems $\mathcal{S}_*$ of two operators $\mathsf{A}$ on systems $\mathcal{S}_\mathsf{A}$ and $\mathsf{B}$ on systems $\mathcal{S}_\mathsf{B}$ is given by the \textbf{composition formula} (\ref{eq:cf}) \cite{chiribella2009theoretical}
\begin{align}
\mathsf{A}*\mathsf{B}:=\frac{1}{d_{\mathcal{S}_*}}\Tr_{\mathcal{S}_*}[\mathsf{A}^{T_{\mathcal{S}_*}}\mathsf{B}],
\end{align}
where $T_{\mathcal{S}_*}$ is the partial transpose on $\mathcal{S}_*$ in the basis of the maximally entangled states used to obtain the Choi operator, and $\Tr_{\mathcal{S}_*}$ is the partial trace on $\mathcal{S}_*$. The normalization is so chosen to match ordinary probabilistic predictions. The composition symbol can be extended to sets of operators, so that $\mathcal{A}*\mathcal{B}:=\{\mathsf{A}*\mathsf{B}:\mathsf{A}\in\mathcal{A},  \mathsf{B}\in \mathcal{B}\}.$ 

The Choi operator has been generalized to processes that do not distinguish input and output systems \cite{oreshkov2016operational}. A processes can be specified directly in terms of a positive semidefinite operator instead of a CP map. The correlational formalism and the characterization of correlation constraints in this work apply to the generalized Choi operators as well, since we work directly with the positive semidefinite operators.

\section{Generalized Pauli operators}\label{sec:gpo}

On a $d$-dimensional Hilbert space $\Hil$, a set of $d^2$ many \textbf{generalized Pauli operators} $\sigma_i$ with $i=(m,n)$ for $m,n=1,\cdots, d$ forms a basis for the Hermitian operators \cite{Hioe1981NMechanics}:
\begin{widetext}
\begin{align}
\sigma_i=\sigma_{(m,n)}=
\begin{cases}
E_{mn}+E_{nm}, \quad &1\le m<n\le d,
\\
i (E_{mn}-E_{nm}), & 1\le n<m \le d,
\\
(\frac{2}{m(m+1)})^{1/2}(\sum_{l=1}^m E_{l}-mE_{m+1}), & 1\le m=n \le d-1,
\\
(\frac{2}{d})^{1/2} \id, & m=n=d,
\end{cases}
\end{align}
where $E_{mn}=\ketbra{m}{n}$, and $E_m=\ketbra{m}{m}$. $\Tr \sigma_i=0$ for all $i$ except $i=(d,d)$.

It is easy to check that
\begin{align}
\Tr[\sigma_i \sigma_j]=2\delta_{i,j}
\end{align}
so this is an orthogonal basis under the Hilbert-Schmidt norm for the real space of the Hermitian operators. In addition,
\begin{align}\label{eq:gpoc}
\Tr[\sigma_i \sigma_j^T]=\Tr[\sigma_j^T \sigma_i]=
\begin{cases}
-2\delta_{i,j}, \quad & i=(m,n), 1\le n<m \le d
\\
2\delta_{i,j} & \text{otherwise}.
\end{cases}
\end{align}
\end{widetext}
Hence orthogonality is preserved if one operator is transposed, as in the composition formula (\ref{eq:cf}). In the context of correlation diagrams, this means a color mismatch at any system of composition eliminates the diagrams.

\section{Correlation type expansion}

For a Hilbert space $\Hil$ with $\LH$ as the space of bounded linear operators on $\Hil$, denote by $L(\Hil)\subset \LH$ the real subspace of hermitian operators on $\Hil$. When $\Hil$ is clear from the context, we sometimes omit it and write $L$ for $L(\Hil)$. 

$L$ can be expanded into a traceful and a traceless part as
\begin{align}\label{eq:lds}
L=L_0\oplus L_1.
\end{align}
The traceful part $L_0\subset L$ is the (one-dimensional) subspace generated by the identity operator and the traceless part $L_1\subset L$ is the subspace of the traceless operators. The generalized Pauli operators of the previous section form a basis for $L$, with $\sigma_{(d,d)}$ spanning $L_0$ and the rest $\sigma_{(m,n)}$ spanning $L_1$.


On a tensor product space $\Hil=\Hil^1\otimes\Hil^2\otimes\cdots\otimes\Hil^m$, define
\begin{align}\label{eq:la}
L_{a}:=L_{a_1}\otimes L_{a_2}\otimes \cdots\otimes L_{a_m}\subset L(\Hil),
\end{align}
e.g., $L_{0100}=L_0\otimes L_1\otimes L_0 \otimes L_0$. Then
\begin{align}
L(\mathcal{H})=\oplus_{a} L_a,
\end{align}
where the sum is over all binary strings of length $m$. 
The binary strings together with $\uh$ form the correlation type elements. With $L_{\uh}:=\{\id\}$, each correlation type element now has a corresponding set of operators $L_a\subset L$. A set $A$ of correlation type elements corresponds to $L_A:=\oplus_{a\in A} L_a\subset L$. We set $L_\varnothing=\{0\}$. 



It follows that any operator $\mathsf{A}\in L(\mathcal{H})$ has a \textbf{correlation type expansion}
\begin{align}\label{eq:cte}
\mathsf{A}=\sum_{a\in A} \mathsf{A}_a, \quad 0\ne \mathsf{A}_a\in L_a,
\end{align}
where $\mathsf{A}_a$ are obtained by projecting $\mathsf{A}$ to $L_a$ and keeping non-zero elements. We use $\uh$ instead of $u$ in the set $A$ whenever possible. $A$ is called $\mathsf{A}$'s \textbf{correlation type}.

\section{Correlation type calculus}\label{sec:btc}

The correlation type/binary string calculus and related conventions are summarized here.

For the composition on systems $\mathcal{S}_*$, define
\begin{align}
A* B:=\{a+b:a\in A, b\in B, (a+b)_{{\mathcal{S}_{*}}}\in \{u,\uh\}\}.
\end{align}
Here $x_\mathcal{S}$ is $x$ restricted to systems $\mathcal{S}$, with $\uh_\mathcal{S}=\uh$.
\begin{convention}\label{cv:star}
Define
\begin{align}
A+B=\{a+b: a\in A, b\in B\}
\end{align} according to
\begin{align}\label{eq:apb}
a+b=
\begin{cases}
a+b\text{ as binary addition},\quad &a,b\ne \uh,
\\
b,\quad &a=\uh,b\ne\uh,
\\
a,\quad &b=\uh.
\end{cases}
\end{align}
$\{a\}*B$ and $\{a\}*\{b\}$ are often abbreviated as $a*B$ and $a*b$, respectively. Sometimes we abuse notation to treat $a*b$ as an element rather than a set and use expressions such as $a*b\in C$.
\end{convention}

Recall that a Choi operator $\mathsf{A}$ defined on systems $\mathcal{S}_1$ is automatically extended to $\mathcal{S}_1\cup \mathcal{S}_2$ as $\mathsf{A}\otimes \id$. Therefore the correlation type elements $a$ and $b$ can always be put on the same systems to carry out the addition, the restriction, and the following subtraction.

\begin{convention}\label{cv:cts}
Define 
\begin{align}
A-B=\{a-b: a\in A, b\in B\}.
\end{align} according to
\begin{align}\label{eq:amb}
a-b=
\begin{cases}
a-b\text{ as binary subtraction},\quad &a,b\ne \uh,
\\
\emptyset,\quad &a=\uh,b\ne\uh,
\\
a,\quad &b=\uh.
\end{cases}
\end{align}
Here $\emptyset$ means $a-b$ outputs no element, since nothing added to $b\ne \uh$ gives $a=\uh$. This $\emptyset$ symbol is used under the rule that $\emptyset+a=a+\emptyset=\emptyset-a=a-\emptyset=\emptyset$ for all $a$, and $\{\emptyset\}=\varnothing$.
\end{convention}

\section{Proof of \Cref{th:cmp}}\label{sec:pcmp}

\begin{lemma}\label{lm:lec}
For $\mathcal{S}_*\supset \mathcal{S}_a\cap \mathcal{S}_b$,
\begin{align}
L_a* L_b=L_{a*b}.
\end{align}
\end{lemma}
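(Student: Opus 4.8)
The plan is to exploit the tensor-product structure of the $L_a$ spaces together with the bilinearity of $*$, reducing the claim to a system-by-system analysis. Since $L_a=L_{a_1}\otimes\cdots\otimes L_{a_m}$ is spanned by product operators $\bigotimes_s\mathsf{A}_s$ with $\mathsf{A}_s\in L_{a_s}$, and $*$ is bilinear, it suffices to compose such product operators and then pass to spans. For product operators the composition formula (\ref{eq:cf}) factorizes completely:
\begin{align*}
\Bigl(\bigotimes_s\mathsf{A}_s\Bigr)*\Bigl(\bigotimes_s\mathsf{B}_s\Bigr)=\frac{1}{d_{\mathcal{S}_*}}\prod_{s\in\mathcal{S}_*}\Tr_s[\mathsf{A}_s^{T}\mathsf{B}_s]\;\bigotimes_{s\notin\mathcal{S}_*}\mathsf{A}_s\mathsf{B}_s,
\end{align*}
so each traced-out system $s\in\mathcal{S}_*$ contributes a scalar $\Tr_s[\mathsf{A}_s^{T}\mathsf{B}_s]$, while each surviving system $s\notin\mathcal{S}_*$ contributes the operator product $\mathsf{A}_s\mathsf{B}_s$.

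First I would dispose of the degenerate case. If $a_s\ne b_s$ at some composed system $s\in\mathcal{S}_*$, then one of $\mathsf{A}_s,\mathsf{B}_s$ lies in $L_0$ (a multiple of $\id$) and the other in $L_1$ (traceless), so $\Tr_s[\mathsf{A}_s^{T}\mathsf{B}_s]=0$ by (\ref{eq:gpoc}); hence every product composes to $0$ and $L_a*L_b=\{0\}$. Combinatorially $(a+b)_s=1$ forces $(a+b)_{\mathcal{S}_*}\notin\{u,\uh\}$, so $a*b=\varnothing$ and $L_{a*b}=L_\varnothing=\{0\}$, matching. The remaining case is $a_s=b_s$ for all $s\in\mathcal{S}_*$, where $a*b=\{a+b\}$ and $L_{a*b}=L_{a+b}$.

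The crux is identifying the type of $\mathsf{A}_s\mathsf{B}_s$ on each surviving system, and this is exactly where the hypothesis $\mathcal{S}_*\supset\mathcal{S}_a\cap\mathcal{S}_b$ enters. The product of two traceless operators need not be traceless (for instance $\sigma_i^2$ has nonzero trace), so a surviving system with $a_s=b_s=1$ would let $\mathsf{A}_s\mathsf{B}_s$ acquire an $L_0$ component and break the type bookkeeping. But $\mathcal{S}_*\supset\mathcal{S}_a\cap\mathcal{S}_b$ means every system with $a_s=b_s=1$ is traced out, so on surviving systems only the combinations $(0,0),(1,0),(0,1)$ occur; there $\mathsf{A}_s\mathsf{B}_s$ is a multiple of $\id$, a traceless operator, and a traceless operator respectively, i.e.\ $\mathsf{A}_s\mathsf{B}_s\in L_{(a+b)_s}$ with $(a+b)_s$ computed by mod-$2$ addition. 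Tensoring over surviving systems (and recalling $(a+b)_s=0$ on the traced systems) shows the composite lies in $L_{a+b}$, giving $L_a*L_b\subseteq L_{a+b}$.

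For the reverse inclusion I would realize an arbitrary product generator $\bigotimes_{s\notin\mathcal{S}_*}\mathsf{C}_s$ of $L_{a+b}$ explicitly: on each composed system pick $\mathsf{A}_s=\mathsf{B}_s$ equal to a fixed $L_0$ or $L_1$ generator so that $\Tr_s[\mathsf{A}_s^{T}\mathsf{B}_s]\ne 0$; on each surviving system place $\mathsf{C}_s$ on whichever factor ($\mathsf{A}$ or $\mathsf{B}$) carries the type-$1$ label and $\id$ on the other (both $\id$ when $(a+b)_s=0$). The factorized formula then returns a nonzero multiple of $\bigotimes_{s\notin\mathcal{S}_*}\mathsf{C}_s$, and rescaling together with linearity yields $L_{a+b}\subseteq L_a*L_b$. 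The main obstacle is precisely the non-closure of the traceless subspace under operator multiplication; once the hypothesis confines all type-$(1,1)$ systems to the trace, both inclusions become routine.
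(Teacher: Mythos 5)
Your proof is correct and is essentially the paper's own argument (\Cref{sec:pcmp}): both reduce the composition formula \eqref{eq:cf} to a system-by-system analysis in which the orthogonality relation \eqref{eq:gpoc} enforces type matching on the composed systems, and the hypothesis $\mathcal{S}_*\supset \mathcal{S}_a\cap\mathcal{S}_b$ confines the only dangerous pairs $(a_s,b_s)=(1,1)$ --- where a product of traceless operators can acquire a traceful component --- to the traced-out systems, with your explicit treatment of the mismatch case and your explicit witness construction for $L_{a+b}\subseteq L_a*L_b$ merely spelling out what the paper's proof leaves implicit (it asserts $L_{a+b}\subset L_a\cdot L_b$ without a construction). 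The one caveat, which you inherit from the paper rather than introduce, is that your final appeal to ``rescaling together with linearity'' tacitly reads $L_a*L_b$ as a linear span rather than as the literal image set of single compositions: e.g.\ for $a=10$, $b=01$, $\mathcal{S}_*=\varnothing$ (permitted, since $\mathcal{S}_a\cap\mathcal{S}_b=\varnothing$), every single $\mathsf{A}*\mathsf{B}$ is a product operator across the two surviving systems, so the set $\{\mathsf{A}*\mathsf{B}\}$ is strictly smaller than $L_{11}=L_{a*b}$, and the stated equality holds only after taking spans --- precisely the same looseness as in the paper's intermediate claim $L_{a+b}\subset L_a\cdot L_b$ (which moreover fails per system for $d\ge 3$, where no nonzero multiple of $\id$ is an exact product of two traceless Hermitian operators, though the trace pairing used after step 3 remains surjective, so the downstream uses of the lemma --- the forward inclusion and the non-vanishing of matched compositions --- are unaffected).
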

\begin{proof}
The composition formula $\mathsf{A}*\mathsf{B}=\frac{1}{d_{\mathcal{S}_*}}\Tr_{\mathcal{S}_*}[\mathsf{A}^{T_{\mathcal{S}_*}}\mathsf{B}]$  (\ref{eq:cf}) has three steps: 1) Partial transpose $\mathsf{A}$ at $\mathcal{S}_*$. 2) Multiply $\mathsf{A}^{T_{\mathcal{S}_*}}$ with $\mathsf{B}$. 3) Take the partial trace and multiply by $\frac{1}{d_{\mathcal{S}_*}}$. The first step leaves $L_a$ invariant, the second step forms product operators, and the third step with the partial trace projects out operators traceless on $\mathcal{S}_*$. 

Since the first step leaves $L_a$ invariant, we move to the second step and introduce the notation $L_a\cdot L_b:=\{\mathsf{A}\mathsf{B}:\mathsf{A}\in L_a, \mathsf{B}\in L_b\}$ for operator products. We claim that $L_{a+b}\subset L_a\cdot L_b\subset L_{a+b}\oplus L_C$, where $L_C$ is a space that will be projected out in the third step. First consider $a$ and $b$ on an individual system. Unless $a=b=1$, $L_a\cdot L_b=L_{a+b}$. When $a=b=1$, this system belongs to $\mathcal{S}_a\cap \mathcal{S}_b$ and hence also to $\mathcal{S}_*$. We have $L_{a+b}=L_0\subset L_a\cdot L_b\subset L_{a+b}\oplus L_C=L_0\oplus L_1$ with $L_{a+b}=L_0$ and $L_C=L_1$. $L_C$ will be projected out by the partial trace on $\mathcal{S}_*$. The generalization to multiple systems amounts to singling out the systems on which both $a$ and $b$ are $1$ and apply the same reasoning as above.

Now apply the third step to $L_{a+b}\subset L_a\cdot L_b\subset L_{a+b}\oplus L_C$ to get $L_{a*b}\subset L_a* L_b\subset L_{a*b}$ and hence the result
\end{proof}

\begin{lemma}
For $\mathcal{S}_*\supset \mathcal{S}_A\cap \mathcal{S}_B$,
\begin{align}\label{eq:cmp}
L_A*L_B\subset& L_{A*B}.
\end{align}
\end{lemma}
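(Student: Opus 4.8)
The plan is to reduce this set-level inclusion to the element-level identity already established in Lemma~\ref{lm:lec}, exploiting the fact that the composition $*$ is bilinear. First I would take arbitrary operators $\mathsf{A}\in L_A$ and $\mathsf{B}\in L_B$ and invoke the correlation type expansion~(\ref{eq:cte}) to write $\mathsf{A}=\sum_{a\in A}\mathsf{A}_a$ and $\mathsf{B}=\sum_{b\in B}\mathsf{B}_b$ with $\mathsf{A}_a\in L_a$ and $\mathsf{B}_b\in L_b$. Since the composition~(\ref{eq:cf}) is assembled from partial transpose, operator multiplication, and partial trace, each linear in its arguments, $*$ is bilinear, so
\begin{align}
\mathsf{A}*\mathsf{B}=\sum_{a\in A,\,b\in B}\mathsf{A}_a*\mathsf{B}_b.
\end{align}

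Next I would apply Lemma~\ref{lm:lec} termwise. Because $\mathcal{S}_a\subset\mathcal{S}_A$ and $\mathcal{S}_b\subset\mathcal{S}_B$, the hypothesis $\mathcal{S}_*\supset\mathcal{S}_A\cap\mathcal{S}_B$ forces $\mathcal{S}_*\supset\mathcal{S}_a\cap\mathcal{S}_b$ for every pair, so the lemma applies and gives $\mathsf{A}_a*\mathsf{B}_b\in L_a*L_b=L_{a*b}$. It then remains to unpack $L_{a*b}$ via the convention for $a*b$: when $(a+b)_{\mathcal{S}_*}\in\{u,\uh\}$ one has $a*b=\{a+b\}$, and $a+b\in A*B$ by the definition~(\ref{eq:btc}), so $L_{a*b}=L_{a+b}\subset L_{A*B}$; otherwise $a*b$ outputs no element, $L_{a*b}=L_\varnothing=\{0\}$, and the term vanishes. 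In both cases $\mathsf{A}_a*\mathsf{B}_b\in L_{A*B}$.

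Finally, since $L_{A*B}=\oplus_{c\in A*B}L_c$ is a linear subspace and hence closed under addition, summing the termwise inclusions yields $\mathsf{A}*\mathsf{B}\in L_{A*B}$, which is the desired containment. I expect no genuine obstacle here, only bookkeeping that requires care: one must confirm that the surviving pairs are precisely those admitted into $A*B$ through the $\{u,\uh\}$ selection condition, and that the discarded pairs contribute exactly $0$ under the empty-output convention, so that no stray component escapes $L_{A*B}$. Distinct pairs $(a,b)$ may yield the same sum $a+b$, but this merely superimposes contributions inside a single $L_{a+b}$ and does not threaten the inclusion.
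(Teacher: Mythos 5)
Your proposal is correct and follows essentially the same route as the paper's proof: both decompose $L_A=\oplus_{a\in A}L_a$ and $L_B=\oplus_{b\in B}L_b$, use bilinearity of $*$, apply Lemma~\ref{lm:lec} termwise via the same support observation $\mathcal{S}_a\cap\mathcal{S}_b\subset\mathcal{S}_A\cap\mathcal{S}_B\subset\mathcal{S}_*$, and conclude $L_{a*b}\subset L_{A*B}$. You merely spell out at the element level (including the empty-output case $L_{a*b}=\{0\}$ and possible coincidences $a+b=a'+b'$) what the paper compresses into a single chain of set inclusions.
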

\begin{proof}
$L_A*L_B=(\oplus_{a\in A}L_a)*(\oplus_{b\in B}L_b)\subset \oplus_{a\in A,b\in B}L_a*L_b=\oplus_{a\in A,b\in B}L_{a*b}= L_{A*B}$, where we used Lemma \ref{lm:lec}, knowing that $\mathcal{S}_a\cap \mathcal{S}_b\subset \mathcal{S}_A\cap \mathcal{S}_B\subset \mathcal{S}_*$ for arbitrary $a\in A$ and $b\in B$.
\end{proof}

\begin{reptheorem}{th:cmp}
For the composition on systems $\mathcal{S}_*$, 
\begin{align}
&\mathcal{C}_{A}*\mathcal{C}_{B}\subset \mathcal{C}_{A*B}.
\end{align}
\end{reptheorem}
\begin{proof}
By (\ref{eq:cmp}), $\mathcal{C}_{A}*\mathcal{C}_{B}\subset L_A*L_B=L_{A*B}$. Intersecting with the set of positive semidefinite operators yields the result.
\end{proof}

\Cref{th:cmp} gives the best general characterization at the level of correlation constraints. The set $\mathcal{C}_{A*B}$ on the right hand side cannot be reduced, since there are $A$ and $B$ so that $\mathcal{C}_{A}*\mathcal{C}_{B}=\mathcal{C}_{A*B}$. For example, if $\mathcal{C}_{A}$ are channels from $s_1$ to $s_2$, and $\mathcal{C}_{B}$ are channels from $s_2$ to $s_3$, then $\mathcal{C}_{A}*\mathcal{C}_{B}$ composed on system $s_2$ contains all channels from $s_1$ to $s_3$, which equals $\mathcal{C}_{A*B}$ (because $A*B=\{\uh,010,110\}_{s_1s_2s_3}*\{\uh,001,011\}_{s_1s_2s_3}=\{\uh,01,11\}_{s_1s_2}$).

On the other hand, the $\subset$ cannot be replaced by $=$, either. For example, if $\mathcal{C}_{A}$ are states on $s_1$ and $\mathcal{C}_{B}$ are states on $s_2$, then $\mathcal{C}_{A}*\mathcal{C}_{B}$ composed on the trivial system are just the product states on $s_1s_2$, which is a proper subset of all the bipartite states $\mathcal{C}_{A*B}$ (because $A*B=\{\uh,10\}*\{\uh,01\}=\{\uh,10,01,11\}$).

\section{First part proof of \Cref{th:catb}}\label{sec:pthcabc}

For composition on $\mathcal{S}_*$ so that  $\mathcal{S}_*\cap \mathcal{S}_B=\varnothing$, 
\begin{align}
\mathcal{C}_{A}\rightarrow \mathcal{C}_{B}:=&\{\mathsf{C}\in\mathcal{C}:\mathcal{C}_{A}*\mathsf{C}\subset \mathcal{C}_{B}\},
\\
L_{A}\rightarrow L_{B}:=&\{\mathsf{C}\in L:L_A*\mathsf{C}\subset L_B\},
\\
A\rightarrow B:=&\{c:A*c\subset B\}
\end{align}
$\mathcal{C}_{A}\rightarrow \mathcal{C}_{B}$ is the set of processes sending $\mathcal{C}_{A}$ to $\mathcal{C}_{B}$ upon composition on $\mathcal{S}_*$. $\mathcal{S}_*\cap \mathcal{S}_B=\varnothing$ is assumed because $\mathcal{C}_{B}$, the correlators resulting from the composition, should not be supported on $\mathcal{S}_*$ where composition took place. Since composition is closed in $\mathcal{C}$, $\mathcal{C}_{A}\rightarrow \mathcal{C}_{B}$ can alternatively be written as
\begin{align}\label{eq:adctc}
\mathcal{C}_{A}\rightarrow \mathcal{C}_{B}=\{\mathsf{C}\in \mathcal{C}:\mathcal{C}_{A}*\mathsf{C}\subset L_B\}.
\end{align}
$L_{A}\rightarrow L_{B}$ is similarly the set of operators that map from $L_A$ to $L_B$. The set of correlation type elements $A\rightarrow B$ is so defined because it characterizes $\mathcal{C}_{A}\rightarrow \mathcal{C}_{B}$. We prove this in \Cref{th:cabc} below, after establishing some lemmas.

\begin{lemma}\label{lm:abnic}
Suppose $a*b\notin C$, $\mathsf{B}\in L_b$, and $\mathsf{B}\ne 0$. Then $L_a*\mathsf{B}\not\subset L_C$.
\end{lemma}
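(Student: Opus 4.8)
The plan is to reduce the claim to a non-degeneracy statement about the composition map and then to establish that non-degeneracy by a direct generalized-Pauli computation. Throughout I read the hypothesis $a*b\notin C$, following the abuse of notation in \Cref{cv:star}, as presupposing that $a*b$ is a genuine (defined) element: the types of $a$ and $b$ agree on every composed system, so that $(a+b)_{\mathcal{S}_*}\in\{u,\uh\}$. This is essential, since otherwise a type mismatch on some $m\in\mathcal{S}_*$ forces $\Tr_m[\sigma_i^T\id]=0$ (or $\Tr_m[\id\,\sigma_j]=0$) and hence $L_a*\mathsf{B}=\{0\}$ for every $\mathsf{B}\in L_b$. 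Write $e:=a*b$.

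First I would note that $\mathsf{A}\mapsto\mathsf{A}*\mathsf{B}$ is linear in $\mathsf{A}$ (partial transpose, multiplication by the fixed $\mathsf{B}$, and partial trace are all linear), so its image $L_a*\mathsf{B}$ is a linear subspace of $L$. By \Cref{lm:lec} this image lies in $L_a*L_b=L_{a*b}=L_e$. Since $e\notin C$ and $L=\oplus_c L_c$ is a direct sum over correlation types, $L_e\cap L_C=\{0\}$; hence every nonzero subspace of $L_e$ fails to be contained in $L_C$. The lemma therefore reduces to showing $L_a*\mathsf{B}\ne\{0\}$, i.e. that the above linear map does not annihilate the nonzero operator $\mathsf{B}$.

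To prove this non-degeneracy I would partition the systems, using that $a$ and $b$ match on $\mathcal{S}_*$, into $P_1$ (composed systems where $a=b=1$), $P_0$ (composed systems where $a=b=0$), $Q=\mathcal{S}_a\setminus\mathcal{S}_*$, and $R=\mathcal{S}_b\setminus\mathcal{S}_*$. Writing $\mathsf{A}=\mathsf{A}_{P_1}\otimes\mathsf{A}_Q\in L_a$ (extended by $\id_{P_0}$) and $\mathsf{B}=\mathsf{B}_{P_1R}\otimes\id_{P_0}\in L_b$, the composition formula (\ref{eq:cf}) gives
\begin{align}
\mathsf{A}*\mathsf{B}=\frac{d_{P_0}}{d_{\mathcal{S}_*}}\,\mathsf{A}_Q\otimes\Tr_{P_1}\!\big[\mathsf{A}_{P_1}^{T}\,\mathsf{B}_{P_1R}\big],
\end{align}
with $\Tr_{P_1}[\,\cdots]$ an operator on $R$. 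Expanding $\mathsf{B}_{P_1R}=\sum_\alpha\sigma_\alpha^{P_1}\otimes\mathsf{B}_\alpha^R$ in the generalized Pauli basis on $P_1$, and choosing $\mathsf{A}_{P_1}=\sigma_\beta^{P_1}$ together with any nonzero $\mathsf{A}_Q\in L_{a_Q}$, the orthogonality relation $\Tr[\sigma_i^T\sigma_j]=\pm2\delta_{ij}$ of (\ref{eq:gpoc}) isolates a single term and yields $\mathsf{A}*\mathsf{B}\propto\mathsf{A}_Q\otimes\mathsf{B}_\beta^R$ with a nonzero constant. If every such composition vanished, then $\mathsf{B}_\beta^R=0$ for all $\beta$, so $\mathsf{B}=0$, contradicting $\mathsf{B}\ne0$. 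Hence some $\mathsf{A}\in L_a$ gives $\mathsf{A}*\mathsf{B}\ne0$, completing the reduction and the proof.

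The main obstacle is precisely this last non-degeneracy step: it is where the matching of types on $\mathcal{S}_*$ (equivalently, definedness of $a*b$) is indispensable, and where one must verify that ranging $\mathsf{A}$ over all of $L_a$ detects an arbitrary nonzero $\mathsf{B}\in L_b$ as a nonzero \emph{operator} on $R$, not merely a scalar. A few degenerate cases deserve a separate line: when $Q=\varnothing$ the factor $\mathsf{A}_Q$ is a scalar, and when $e\in\{u,\uh\}$ one uses that the nonzero subspace $L_a*\mathsf{B}$ is all of the one-dimensional $L_u$, which is not contained in the strictly smaller $L_{\uh}=\{\id\}$, thereby respecting the convention that $\uh$ is implicitly present whenever $u$ is.
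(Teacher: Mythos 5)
Your proof is correct and follows essentially the same route as the paper's: both reduce the claim, via Lemma \ref{lm:lec} and the directness of the correlation type decomposition (so that $L_{a*b}\cap L_C$ is trivial), to showing $L_a*\mathsf{B}\ne\{0\}$, and both use the fact that $a*b\notin C$ presupposes $a*b\ne\emptyset$, i.e.\ that the types match on $\mathcal{S}_*$, to exhibit an $\mathsf{A}\in L_a$ with $\mathsf{A}*\mathsf{B}\ne 0$. The only difference is one of detail: where the paper merely asserts that such an $\mathsf{A}$ can be picked, you substantiate the non-degeneracy with an explicit generalized-Pauli orthogonality computation, which elaborates rather than departs from the paper's argument.
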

\begin{proof}
Since $a*b\notin C$, $L_{a*b}\cap L_C=\{0\}$. By  Theorem \ref{th:cmp}, $L_a*\mathsf{B}\subset L_a*L_b\subset L_{a*b}$, so $L_a*\mathsf{B}\not\subset L_C$, unless $L_a*\mathsf{B}\subset\{0\}$.

Since $a*b\notin C$, $a*b\ne \varnothing$. By the definition of $a*b$ (\Cref{cv:star}), this implies that $(a+b)_{\mathcal{S}_*}\in \{\uh,u\}$, which means either $a_{\mathcal{S}_*}=b_{\mathcal{S}_*}$, or one of them is $\uh$ and the other is $u$. In either case it is possible to pick $\mathsf{A}\in L_a$ so that $\mathsf{A}*\mathsf{B}\ne 0$. This implies $L_a*\mathsf{B}\not\subset \{0\}$.
\end{proof}

\begin{lemma}\label{lm:lccc}
For any non-zero $\mathsf{B}\in L_b$, $L_A*\mathsf{B}\subset L_C\iff A*b\subset C$.
\end{lemma}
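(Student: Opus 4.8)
The plan is to prove the two implications separately, in each case reducing the claim about the whole space $L_A$ to the single-type results already in hand. The key ingredients are the linearity of the composition $*$ in its first argument, which combined with $L_A=\oplus_{a\in A}L_a$ gives $L_A*\mathsf{B}=\sum_{a\in A}L_a*\mathsf{B}$; the single-type identity $L_a*L_b=L_{a*b}$ of \Cref{lm:lec}; and the non-containment statement of \Cref{lm:abnic}. Throughout I work under the ambient systems hypothesis $\mathcal{S}_*\supset \mathcal{S}_a\cap\mathcal{S}_b$ that \Cref{lm:lec} requires, which is automatic since $\mathcal{S}_*$ is the set of composed systems.

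For the direction $A*b\subset C\Rightarrow L_A*\mathsf{B}\subset L_C$, I would fix an arbitrary $a\in A$ and, using $\mathsf{B}\in L_b$ together with \Cref{lm:lec}, bound $L_a*\mathsf{B}\subset L_a*L_b=L_{a*b}$. There are two cases. If $a*b=\varnothing$, then $L_{a*b}=L_\varnothing=\{0\}\subset L_C$. Otherwise $a*b$ is a genuine element, hence lies in $A*b\subset C$, so $L_{a*b}\subset L_C$. In either case $L_a*\mathsf{B}\subset L_C$, and summing over $a\in A$ (invoking linearity and that $L_C$ is a subspace) yields $L_A*\mathsf{B}=\sum_{a\in A}L_a*\mathsf{B}\subset L_C$.

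For the converse I would argue by contraposition. Assume $A*b\not\subset C$, so there is some $a\in A$ with $a*b\notin C$ (in particular $a*b\ne\varnothing$, as required). Applying \Cref{lm:abnic} to this $a$ gives $L_a*\mathsf{B}\not\subset L_C$. Since $L_a\subset L_A$ implies $L_a*\mathsf{B}\subset L_A*\mathsf{B}$ by monotonicity of composition in its first slot, any witness to $L_a*\mathsf{B}\not\subset L_C$ is also a witness to $L_A*\mathsf{B}\not\subset L_C$, which completes the contrapositive.

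I expect no deep obstacle here: both directions are short once \Cref{lm:lec} and \Cref{lm:abnic} are available. The only point demanding care is the bookkeeping around the empty-output convention for $a*b$ (\Cref{cv:star}): one must keep in mind that the elements of $A*b$ are precisely the non-empty $a*b$, that an empty $a*b$ contributes only $\{0\}$ in the forward direction and so causes no harm, and that the hypothesis $a*b\notin C$ already forces $a*b\ne\varnothing$, so that \Cref{lm:abnic} applies verbatim in the backward direction.
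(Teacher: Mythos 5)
Your proof is correct and follows essentially the same route as the paper's: the forward direction is the paper's invocation of the set-level inclusion $L_A * L_b \subset L_{A*b}$ (which you simply re-derive by decomposing $L_A = \oplus_{a\in A} L_a$ and applying \Cref{lm:lec} summand by summand, with the $a*b=\varnothing$ case correctly absorbed into $L_\varnothing=\{0\}$), and the backward direction is the identical contrapositive via \Cref{lm:abnic}. Your explicit bookkeeping of the empty-output convention and the monotonicity remark $L_a*\mathsf{B}\subset L_A*\mathsf{B}$ are points the paper leaves implicit, but they introduce no new idea.
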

\begin{proof}
Suppose $A*b\subset C$. Then by Theorem \ref{th:cmp}, $L_A*L_b\subset L_{A*b}\subset L_C$, which implies $L_A*\mathsf{B}\subset L_C$. Conversely, suppose $A*b\not\subset C$. Then there is an $a\in A$ so that $a*b\notin C$. Let $\mathsf{B}\in L_b$ be an arbitrary non-zero element. By Lemma \ref{lm:abnic}, $L_a*\mathsf{B}\not\subset L_C$. Hence $L_A*\mathsf{B}\not\subset L_C$.
\end{proof}

\begin{proposition}\label{th:ldm}
$L_A\rightarrow L_B=L_{A\rightarrow B}$.
\end{proposition}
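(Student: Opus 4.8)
The plan is to prove the two inclusions $L_A\rightarrow L_B\subset L_{A\rightarrow B}$ and $L_{A\rightarrow B}\subset L_A\rightarrow L_B$ separately, working throughout with the correlation type expansion (\ref{eq:cte}) and the direct sum $L(\mathcal{H})=\oplus_d L_d$, and using that the composition $*$ is bilinear (it is assembled from partial transpose, operator multiplication, and partial trace, all linear). Lemmas \ref{lm:lccc} and \ref{lm:abnic}, which already handle a single correlation type element composed against $L_A$, are the workhorses.

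For the inclusion $L_{A\rightarrow B}\subset L_A\rightarrow L_B$ I would take $\mathsf{C}\in L_{A\rightarrow B}$ and expand $\mathsf{C}=\sum_c \mathsf{C}_c$ with each nonzero $\mathsf{C}_c\in L_c$ and every index $c\in A\rightarrow B$, so that $A*c\subset B$. Lemma \ref{lm:lccc} then gives $L_A*\mathsf{C}_c\subset L_B$ for each $c$, and since $L_B$ is a subspace and $*$ distributes over the finite sum, $L_A*\mathsf{C}=\sum_c L_A*\mathsf{C}_c\subset L_B$, whence $\mathsf{C}\in L_A\rightarrow L_B$. This direction is essentially immediate from Lemma \ref{lm:lccc} and linearity.

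For the reverse inclusion I would argue by contraposition. Suppose $\mathsf{C}\notin L_{A\rightarrow B}$; expanding $\mathsf{C}=\sum_c\mathsf{C}_c$, there is a nonzero component $\mathsf{C}_c$ with $c\notin A\rightarrow B$, hence some $a\in A$ with $a*c\notin B$. The crux is to show this single bad component cannot be cancelled by the others after composing with a suitable $\mathsf{A}\in L_a$. The key observation is that $c'\mapsto a*c'$ is injective on the indices composing nontrivially with $a$: if $a*c_1=a*c_2$ with both defined, comparing $(a+c_1)$ and $(a+c_2)$ off $\mathcal{S}_*$ forces agreement off $\mathcal{S}_*$, while the composition condition $(a+c_i)_{\mathcal{S}_*}\in\{u,\uh\}$ pins $(c_i)_{\mathcal{S}_*}=a_{\mathcal{S}_*}$ and forces agreement on $\mathcal{S}_*$ as well, so $c_1=c_2$. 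Since by Theorem \ref{th:cmp} each $\mathsf{A}*\mathsf{C}_{c'}$ lies in $L_{a*c'}$ (or vanishes when $a*c'=\varnothing$), injectivity together with the directness of $L=\oplus_d L_d$ shows that the $L_{a*c}$-component of $\mathsf{A}*\mathsf{C}$ equals exactly $\mathsf{A}*\mathsf{C}_c$, with no contribution from any $c'\neq c$.

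To finish I would invoke Lemma \ref{lm:abnic}: since $a*c\notin B$ and $\mathsf{C}_c\neq 0$, there is $\mathsf{A}\in L_a\subset L_A$ with $L_a*\mathsf{C}_c\not\subset L_B$, i.e. $\mathsf{A}*\mathsf{C}_c\neq 0$ (here $a*c\notin B$ means $L_{a*c}\cap L_B=\{0\}$ by directness, so being outside $L_B$ is the same as being nonzero). Then $\mathsf{A}*\mathsf{C}$ has a nonzero $L_{a*c}$-component while every element of $L_B$ has zero component there, so $\mathsf{A}*\mathsf{C}\notin L_B$ and $\mathsf{C}\notin L_A\rightarrow L_B$. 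I expect the main obstacle to be precisely this no-cancellation step: one must rule out that the images of the good components conspire to hide the bad one, which is exactly what injectivity of $c'\mapsto a*c'$ and the direct-sum structure secure. Some care is also needed with the $\uh$ versus $u$ bookkeeping inside the injectivity argument, which is controlled by Convention \ref{cv:star} and the stated treatment of $\uh$.
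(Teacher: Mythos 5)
Your proposal is correct and takes essentially the same route as the paper's proof: both directions rest on the correlation type expansion and on Lemmas \ref{lm:lccc} and \ref{lm:abnic} (your easy inclusion is the paper's first step, and your contrapositive argument is the content of the paper's second step, where it invokes Lemma \ref{lm:lccc} componentwise). The only difference is one of completeness rather than method: you explicitly justify the no-cancellation step --- the injectivity of $c'\mapsto a*c'$ among types composing nontrivially with a fixed $a$, combined with the directness of $L=\oplus_{d} L_d$ --- which the paper uses tacitly when it passes from $\sum_{d\in D}L_A*\mathsf{D}_d\subset L_B$ to $L_A*\mathsf{D}_d\subset L_B$ for each $d$, so your write-up fills in a detail the paper glosses over (including the $\uh$ versus $u$ bookkeeping, which is harmless since both never occur in one expansion).
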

\begin{proof}
Let $C=A\rightarrow B$. First, $L_A\rightarrow L_B\supset L_C$, because $L_A*L_C\subset L_{A*C}=\oplus_{c\in C}L_{A*c}\subset L_B$, where Theorem \ref{th:cmp} is used in the first step and the definition of $A\rightarrow B$ is used in the last step.

Next, we show that $L_A\rightarrow L_B\subset L_C$. Let $\mathsf{D}\in L_A\rightarrow L_B$ be arbitrary, with correlation type $D$ and expansion $\mathsf{D}=\sum_{d\in D} \mathsf{D}_d$ with $0\ne\mathsf{D}_d\in L_d$. $L_A*\mathsf{D}=\sum_{d\in D}L_A*\mathsf{D}_d\subset L_B$ implies $L_A*\mathsf{D}_d\subset L_B$ for all $d\in D$. By Lemma \ref{lm:lccc}, $A*d\subset B$ for all $d\in D$, whence $D\subset C$ and $\mathsf{D}\in L_C$. Since $\mathsf{D}$ is arbitrary, $L_A\rightarrow L_B\subset L_C$.
\end{proof}

\begin{lemma}\label{lm:lptc}
$(L_A\rightarrow L_B)\cap\mathcal{C}=\mathcal{C}_{A}\rightarrow \mathcal{C}_{B}$.
\end{lemma}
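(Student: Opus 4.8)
The plan is to peel off the positivity requirement and reduce the asserted set identity to a statement about linear spans, which then follows at once from the linearity of the composition $*$.

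First I would rewrite both sides as subsets of $\mathcal{C}$ cut out by the same type of containment condition. Because composition is closed in $\mathcal{C}$, the alternative form (\ref{eq:adctc}) gives $\mathcal{C}_{A}\rightarrow \mathcal{C}_{B}=\{\mathsf{C}\in\mathcal{C}:\mathcal{C}_{A}*\mathsf{C}\subset L_B\}$, while unfolding the definitions gives $(L_A\rightarrow L_B)\cap\mathcal{C}=\{\mathsf{C}\in\mathcal{C}:L_A*\mathsf{C}\subset L_B\}$. Thus it suffices to prove, for each fixed $\mathsf{C}\in\mathcal{C}$,
\begin{align}
L_A*\mathsf{C}\subset L_B\iff \mathcal{C}_{A}*\mathsf{C}\subset L_B.
\end{align}
The implication ``$\Rightarrow$'' is immediate from $\mathcal{C}_{A}=\mathcal{C}\cap L_A\subset L_A$, since then $\mathcal{C}_{A}*\mathsf{C}\subset L_A*\mathsf{C}\subset L_B$.

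For the reverse implication I would invoke linearity. The composition formula (\ref{eq:cf}) is linear in its first slot and $L_B$ is a real linear subspace, so once I know that $\mathcal{C}_{A}$ linearly spans $L_A$, writing any $\mathsf{A}\in L_A$ as $\mathsf{A}=\sum_i\lambda_i\mathsf{A}_i$ with $\mathsf{A}_i\in\mathcal{C}_{A}$ and $\lambda_i\in\mathbb{R}$ yields $\mathsf{A}*\mathsf{C}=\sum_i\lambda_i(\mathsf{A}_i*\mathsf{C})\in L_B$, hence $L_A*\mathsf{C}\subset L_B$. Everything therefore reduces to the spanning claim $L_A\subset\operatorname{span}_{\mathbb{R}}\mathcal{C}_{A}$.

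This spanning claim is the one substantive step, and it hinges on the identity operator sitting inside the picture: since every physical correlation constraint contains $u$ (and hence $\uh$), we have $\id\in\mathcal{C}_{A}$. Given an arbitrary $\mathsf{A}\in L_A$, I would split off its $L_0$-part $c\id$ and its traceless part $\mathsf{X}$ (so $\mathsf{X}$ is supported on the nonidentity types allowed by $A$); for sufficiently small $\epsilon>0$ the perturbation $\id+\epsilon\mathsf{X}$ is positive semidefinite and still constrained by $A$, so $\id+\epsilon\mathsf{X}\in\mathcal{C}_{A}$. Together with $\id\in\mathcal{C}_{A}$ this exhibits $\mathsf{X}=\epsilon^{-1}\big((\id+\epsilon\mathsf{X})-\id\big)$, and hence $\mathsf{A}=c\id+\mathsf{X}$, as a real combination of elements of $\mathcal{C}_{A}$. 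The main thing to get right --- the expected obstacle --- is precisely this spanning argument: one must ensure the positive cone sitting inside $L_A$ is fat enough to recover all of $L_A$, which is exactly what the presence of $\id$ (equivalently $u$ or $\uh$ in $A$) guarantees. Once that is secured, the equivalence is purely formal.
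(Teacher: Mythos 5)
Your overall strategy is sound and genuinely different from the paper's. The paper first reduces everything to correlation types (Proposition \ref{th:ldm}: $L_A\rightarrow L_B=L_{A\rightarrow B}$) and then argues by contradiction: from a hypothetical $\mathsf{C}$ in the right-hand side but not the left, it extracts a single type component $c$ with $a*c\notin B$, uses Lemma \ref{lm:abnic} to produce a Hermitian witness $\mathsf{A}\in L_a$ with $\mathsf{A}*\mathsf{C}\notin L_B$, and shifts that one witness by a multiple of $\id$ into the cone $\mathcal{C}_A$. You bypass the type-level machinery entirely and argue once and for all that the cone $\mathcal{C}_A$ spans $L_A$, then invoke linearity of $*$ in its first slot. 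Both proofs hinge on the same mechanism (perturbing around $\id$, which is why $u$ or $\uh$ must sit in $A$ -- an assumption the paper's proof also uses tacitly in its ``adding a multiple of $\id$'' step, and without which the lemma's nontrivial direction fails, cf.\ Lemma \ref{lm:ctuuh}); your version is arguably cleaner in that it localizes all the positivity considerations in one spanning statement.

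However, there is a concrete flaw in the ``purely formal'' step, and it bites in exactly the cases the lemma is applied to. Under the paper's convention $L_{\uh}=\{\id\}$, when $\uh\ein B$ (i.e., $\uh\in B$ but $u\notin B$ -- e.g., $B=\{\uh\}$ in the process matrix application, and $\uh\in A,B$ for higher order maps) the set $L_B$ is an \emph{affine} subset, not a real linear subspace: $0\notin L_B$ and $\lambda\,\id\in L_B$ only for $\lambda=1$. Likewise, when $\uh\ein A$, $\mathcal{C}_A$ does not linearly span $L_A$ in the sense you need -- the linear span contains $c\,\id+\mathsf{X}$ for all $c\in\mathbb{R}$, whereas $L_A$ pins the type-$u$ component to $\id$ itself. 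So your displayed chain $\mathsf{A}*\mathsf{C}=\sum_i\lambda_i(\mathsf{A}_i*\mathsf{C})\in L_B$ is unjustified as written: a nontrivial linear combination of elements of $L_B=\{\id+\mathsf{Y}\}$ leaves $L_B$ unless the coefficients sum to $1$. The repair is cheap but must be made explicitly: the composition (\ref{eq:cf}) preserves \emph{affine} combinations in its first argument, $L_B$ is closed under affine combinations whether $u\in B$ or $\uh\ein B$, and your own decomposition, rewritten as
\begin{align}
\mathsf{A}\;=\;\epsilon^{-1}\bigl(\id+\epsilon\mathsf{X}\bigr)+\bigl(c-\epsilon^{-1}\bigr)\id,
\end{align}
has coefficient sum $c$, which the constraint $\uh\ein A$ forces to equal $1$ -- i.e., it is an affine combination precisely in the case where that is required (and when $u\in A$ one has genuine linear freedom, so both readings work there). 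Restate the spanning claim as an affine-span claim and the equivalence via (\ref{eq:adctc}) then goes through in all cases.
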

\begin{proof}
$LHS=\{\mathsf{C}\in \mathcal{C}:L_A*\mathsf{C}\subset L_B\}$, and $RHS=\{\mathsf{C}\in \mathcal{C}:\mathcal{C}_{A}*\mathsf{C}\subset L_B\}$ by (\ref{eq:adctc}). Since $\mathcal{C}_{A}\subset L_A$, $LHS\subset RHS$.

We show $LHS\supset RHS$ by contradiction. Suppose $LHS\not\supset RHS$, i.e., there exists $\mathsf{C}\in RHS$ so that $\mathsf{C}\notin LHS$. Since $\mathsf{C}
\in \mathcal{C}$, this implies that $\mathsf{C}\notin L_A\rightarrow L_B=L_D$, where by \Cref{th:ldm}, $D=\{d:A*d\subset B\}$. Denoting the correlation type of $\mathsf{C}$ by $C$, we can find $c\in C$ so that $c\notin D$, which implies there is $a\in A$ so that $a*c\notin B$. By Lemma \ref{lm:abnic}, $L_a*\mathsf{C}\not\subset L_B$, which implies we can find $\mathsf{A}\in L_a$ so that $\mathsf{A}*\mathsf{C}\notin L_B$. By adding a multiple of $\id$, we can make $\mathsf{A}\in \mathcal{C}_{A}$ so that $\mathsf{A}*\mathsf{C}\notin \mathcal{C}_{B}$. This contradicts the assumption that $\mathsf{C}\in RHS$.
\end{proof}

\begin{proposition}\label{th:cabc}
For composition on $\mathcal{S}_*$ so that  $\mathcal{S}_*\cap \mathcal{S}_B=\varnothing$,
\begin{align}
\mathcal{C}_{A}\rightarrow \mathcal{C}_{B}=\mathcal{C}_{A\rightarrow B}.
\end{align}
\end{proposition}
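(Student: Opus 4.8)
The plan is to obtain the claim as a direct corollary of the two results just established, namely Proposition~\ref{th:ldm} and Lemma~\ref{lm:lptc}, together with the defining relation $\mathcal{C}_X = L_X \cap \mathcal{C}$ valid for any set $X$ of correlation type elements: a positive semidefinite operator is constrained by $X$ precisely when it lies in the Hermitian subspace $L_X$ and in the positive cone $\mathcal{C}$. This is the same identification already used tacitly in the proof of \Cref{th:cmp} (``intersecting with the set of positive semidefinite operators'').

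Concretely, I would first unfold the right-hand side using this relation with $X = A\rightarrow B$, writing $\mathcal{C}_{A\rightarrow B} = L_{A\rightarrow B}\cap\mathcal{C}$. Next I would invoke Proposition~\ref{th:ldm}, which identifies $L_{A\rightarrow B}$ with the operator-space mapping set $L_A\rightarrow L_B$, to rewrite this as $(L_A\rightarrow L_B)\cap\mathcal{C}$. Finally I would apply Lemma~\ref{lm:lptc}, which states exactly that $(L_A\rightarrow L_B)\cap\mathcal{C} = \mathcal{C}_{A}\rightarrow \mathcal{C}_{B}$, to reach the left-hand side. Chaining these three equalities yields the proposition in one line.

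The main obstacle has in fact already been dispatched in the supporting lemmas, so the proposition itself is a short composition of equalities rather than a fresh argument. The genuinely nontrivial point lives inside Lemma~\ref{lm:lptc}: passing from the full Hermitian space $L_A$ down to the strictly smaller positive cone $\mathcal{C}_A$ could a priori weaken the mapping constraint, and the lemma rules this out by adding a suitable multiple of $\id$ to a witnessing Hermitian operator $\mathsf{A}\in L_a$ so as to push it into $\mathcal{C}_A$ while preserving the obstruction $\mathsf{A}*\mathsf{C}\notin\mathcal{C}_B$. I would therefore be careful to cite the inclusion $\mathcal{C}_A\subset L_A$ and the closure of composition within $\mathcal{C}$ when reducing to the Hermitian-level statement of Proposition~\ref{th:ldm}, so that no hidden gap remains between the operator-space and positive-semidefinite formulations.
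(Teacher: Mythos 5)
Your proposal is correct and takes essentially the same route as the paper, whose entire proof is the identical one-line chain $\mathcal{C}_{A}\rightarrow \mathcal{C}_{B}=(L_A\rightarrow L_B)\cap \mathcal{C}=L_{A\rightarrow B}\cap \mathcal{C}=\mathcal{C}_{A\rightarrow B}$ obtained by citing Lemma~\ref{lm:lptc} and Proposition~\ref{th:ldm}. Your added commentary correctly locates the only nontrivial content in Lemma~\ref{lm:lptc} (the positivity witness obtained by adding a multiple of $\id$), which matches the paper's structure.
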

\begin{proof}
By  Lemma \ref{lm:lptc} and \Cref{th:ldm}, $LHS=(L_A\rightarrow L_B)\cap \mathcal{C}=L_{A\rightarrow B}\cap \mathcal{C}=RHS$.
\end{proof}
This establishes the first part (\ref{eq:catbc}) of \Cref{th:catb}.

\section{Second part proof of \Cref{th:catb}}\label{sec:pcatb}

Suppose $\mathcal{S}_A\subset \mathcal{S}$. By the definition of $\pp{A_\mathcal{S}}$ (\ref{eq:perp}) and the convention that $\uh\in A$ implicitly whenever $u\in A$,
\begin{align}
(\pp{A_\mathcal{S}})_\mathcal{S}^\perp=
\begin{cases}\label{eq:app}
A\backslash\{\uh\}, \quad &\uh\ein A,
\\
A, \quad &\text{otherwise}.
\end{cases}
\end{align}

By (\ref{eq:apb}) and (\ref{eq:amb}) we have $a+b=a-b$, except when $a=\uh, b\ne\uh$. Moreover (recall \Cref{cv:cts} for $\emptyset$),
\begin{align}\label{eq:apbmb}
(a+b)-b=
\begin{cases}
u,\quad &a=\uh,b\ne\uh,
\\
a,\quad &\text{ otherwise}.
\end{cases}
\end{align}
\begin{align}\label{eq:ambpb}
(a-b)+b=
\begin{cases}
\emptyset,\quad &a=\uh,b\ne\uh,
\\
a,\quad &\text{ otherwise}.
\end{cases}
\end{align}

\begin{lemma}
Suppose $\mathcal{S}_B\cap \mathcal{S}_*=\varnothing$. Then
\begin{align}\label{eq:atbp}
  \pp{(A\rightarrow B)}=\pp{B}-A  
\end{align}
\end{lemma}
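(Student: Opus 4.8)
The plan is to prove $\pp{(A\rightarrow B)}=\pp{B}-A$ by a double inclusion, unfolding the definition $A\rightarrow B=\{c:A*c\subset B\}$ from (\ref{eq:atb}) and exploiting an adjunction between the operations $+$ and $-$ that is already encoded in (\ref{eq:apbmb}) and (\ref{eq:ambpb}). The starting observation is that for a single binary string $c$ and a single $a\in A$, the composition $a*c$ is either empty or the singleton $\{a+c\}$: by \Cref{cv:star} it equals $\{a+c\}$ precisely when $(a+c)_{\mathcal{S}_*}\in\{u,\uh\}$, and is empty otherwise. Hence $a*c\not\subset B$ holds if and only if $(a+c)_{\mathcal{S}_*}\in\{u,\uh\}$ and $a+c\notin B$, and consequently $c\notin A\rightarrow B$ if and only if there exists $a\in A$ with $(a+c)_{\mathcal{S}_*}\in\{u,\uh\}$ and $a+c\notin B$.

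The heart of the argument is the equivalence, for binary strings $b'$ and $c$ and any $a$ (possibly $\uh$), that $a+c=b'\iff c=b'-a$. The forward direction is the identity $(a+c)-a=c$, which is (\ref{eq:apbmb}) read off with $c\ne\uh$; the reverse is $(b'-a)+a=b'$, which is (\ref{eq:ambpb}) read off with $b'\ne\uh$. Using $\mathcal{S}_B\cap\mathcal{S}_*=\varnothing$, any string $b'$ supported on $\mathcal{S}_B$ automatically has $b'_{\mathcal{S}_*}=u$, so the restriction condition $(a+c)_{\mathcal{S}_*}\in\{u,\uh\}$ comes for free once $a+c=b'\in\tau_{\mathcal{S}_B}$. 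This is what lets me replace the clumsy condition ``$a+c\notin B$ with $a+c$ trivial on $\mathcal{S}_*$'' by the clean statement ``$a+c\in\pp{B}$'', recalling $\pp{B}=\tau_{\mathcal{S}_B}\backslash B$ from (\ref{eq:perp}).

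Combining the two ingredients chains into $c\notin A\rightarrow B\iff \exists a\in A:\ a+c\in\pp{B}\iff \exists a\in A,\,b'\in\pp{B}:\ c=b'-a\iff c\in\pp{B}-A$, which yields both inclusions at once. Concretely, for $\pp{(A\rightarrow B)}\subset\pp{B}-A$ I take $c\in\pp{(A\rightarrow B)}$, extract a witness $a\in A$ with $a+c\notin B$ and $(a+c)_{\mathcal{S}_*}\in\{u,\uh\}$, set $b':=a+c$, check $b'\in\pp{B}$, and conclude $c=b'-a\in\pp{B}-A$. For the reverse inclusion, given $c=b'-a$ with $b'\in\pp{B}$ I use $(b'-a)+a=b'$ to get $a+c=b'\notin B$ with $b'_{\mathcal{S}_*}=u$, so $a*c=\{b'\}\not\subset B$ and $c\notin A\rightarrow B$; since $b'\ne\uh$ forces $b'-a$ to be an ordinary binary string, $c\in\tau_{\mathcal{S}_{A\rightarrow B}}$ and hence $c\in\pp{(A\rightarrow B)}$.

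I expect the main obstacle to be the support bookkeeping rather than the algebra. The delicate point is confirming that $\mathcal{S}_{A\rightarrow B}=\mathcal{S}_*\cup\mathcal{S}_B$ and, crucially, that whenever $a+c$ is trivial on $\mathcal{S}_*$ its surviving support lies inside $\mathcal{S}_B$, so that $a+c\notin B$ genuinely places $a+c$ in $\pp{B}=\tau_{\mathcal{S}_B}\backslash B$ rather than merely outside $B$ on some larger system set; this relies on the consistency condition $\mathcal{S}_A\backslash\mathcal{S}_*\subset\mathcal{S}_B$ inherent in the decomposition problem. I must also verify that $\pp{B}-A$ never emits $\uh$ (since $b'\ne\uh$ never subtracts to $\emptyset$ or to $\uh$), matching the fact that $\pp{(A\rightarrow B)}\subset\tau_{\mathcal{S}_{A\rightarrow B}}$ excludes $\uh$ by definition. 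Once these support and $\uh$-handling checks are in place, the equivalence above closes the proof.
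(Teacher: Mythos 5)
Your proposal is correct and follows essentially the same route as the paper's own proof: a double inclusion in which the witness $b':=a+c$ is inverted through the identities (\ref{eq:apbmb}) and (\ref{eq:ambpb}), with the hypothesis $\mathcal{S}_B\cap\mathcal{S}_*=\varnothing$ invoked at exactly the same spot the paper invokes it, namely to make $((b'-a)+a)_{\mathcal{S}_*}$ trivial so that $(b'-a)*a$ is nonempty, plus the same $\uh$-exclusion checks on both sides. One caveat on your closing remarks: the side claim $\mathcal{S}_{A\rightarrow B}=\mathcal{S}_*\cup\mathcal{S}_B$ is false in general (the paper's own example in the third-part proof has $A\rightarrow B=\{110\}$ supported on $\{s_1,s_2\}$ while $\mathcal{S}_*\cup\mathcal{S}_B=\{s_1,s_2,s_3\}$), but nothing in your main chain actually uses it, nor is a consistency condition $\mathcal{S}_A\backslash\mathcal{S}_*\subset\mathcal{S}_B$ required, since the paper's convention in (\ref{eq:perp}) of extending strings and constraint sets by joining $0$'s absorbs all the support bookkeeping.
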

\begin{proof}
For simplicity denote $A\rightarrow B=\{c:A*c\subset B\}$ by $C$. First we show that $\pp{C}\subset\pp{B}-A$. If $C=\varnothing$ the statement clearly holds. Otherwise let $c'\in \pp{C}$ be arbitrary. It suffices to show that $c'\in \pp{B}-A$. By the definition of $C$, there exists $a\in A$ so that $a*c'\notin B$. This implies that $b':=a+c'=a*c'$. Since $c'\in \pp{C}$, $c'\ne \uh$, whence $b'\ne \uh$. This implies $b'\in \pp{B}$. Now $c'\ne \uh$ and (\ref{eq:apbmb}) imply that $c'=(a+c')-a=b'-a\in \pp{B}-A$.

Next we show that $\pp{C}\supset\pp{B}-A$. It suffices to show that $b'-a\in \pp{C}$ for arbitrary $b'\in \pp{B}$ and $a\in A$. Since $b'\in \pp{B}$, $b'\ne \uh$. By (\ref{eq:ambpb}), $(b'-a)+a=b'\in \pp{B}$. Since $b'\in \pp{B}$ has support within $\mathcal{S}_B$ and $\mathcal{S}_B\cap \mathcal{S}_*=\varnothing$, $b'_{\mathcal{S}_*}=((b'-a)+a)_{\mathcal{S}_*}\in \{\uh,u\}$. Then $(b'-a)*a=(b'-a)+a=b'\in \pp{B}$, whence $b'-a\notin C$. Since $b'\ne \uh$, $b'-a\ne\uh$, which means $b'-a\in \pp{C}$.
\end{proof}

\begin{proposition}
\begin{align}
A\rightarrow B=
\begin{cases}
(\pp{B}-A)_{\mathcal{S}_{A\rightarrow B}}^\perp\cup\{\uh\}, \quad &A*\uh\subset B,
\\
(\pp{B}-A)_{\mathcal{S}_{A\rightarrow B}}^\perp, \quad &\text{otherwise}.
\end{cases}
\end{align}
\end{proposition}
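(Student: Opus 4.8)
The plan is to recover $A\rightarrow B$ by inverting the identity (\ref{eq:atbp}), exploiting that taking the perp twice is the identity map up to the $\uh$ bookkeeping recorded in (\ref{eq:app}). Writing $C=A\rightarrow B$, the set $C$ is supported on $\mathcal{S}_{A\rightarrow B}$, so $\pp{C}=\tau_{\mathcal{S}_{A\rightarrow B}}\backslash C$ coincides with $\pp{C_{\mathcal{S}_{A\rightarrow B}}}$. The lemma (\ref{eq:atbp}) gives $\pp{C}=\pp{B}-A$, and applying the perp once more over $\mathcal{S}_{A\rightarrow B}$ yields
\[
(\pp{B}-A)_{\mathcal{S}_{A\rightarrow B}}^\perp=(\pp{C_{\mathcal{S}_{A\rightarrow B}}})_{\mathcal{S}_{A\rightarrow B}}^\perp.
\]
By (\ref{eq:app}) the right-hand side equals $C\backslash\{\uh\}$ when $\uh\ein C$ and equals $C$ otherwise. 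Solving for $C$ in each case gives $C=(\pp{B}-A)_{\mathcal{S}_{A\rightarrow B}}^\perp\cup\{\uh\}$ when $\uh\ein C$, and $C=(\pp{B}-A)_{\mathcal{S}_{A\rightarrow B}}^\perp$ otherwise. So the entire task reduces to showing that this case split, phrased through $\uh\ein C$, matches the case split phrased through $A*\uh\subset B$ in the statement.

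The bridge is the tautology $\uh\in C\iff A*\uh\subset B$, read straight off the definition $A\rightarrow B=\{c:A*c\subset B\}$, together with the analogous $u\in C\iff A*u\subset B$. First I would compute $A*\uh$ and $A*u$ on $\mathcal{S}_*$ explicitly: both equal $\{a\in A:a\ne\uh,\ a_{\mathcal{S}_*}=u\}$ supplemented by a single trivial element, which is $\uh$ for $A*\uh$ and $u$ for $A*u$, present exactly when $\uh\in A$. Since $u\in B$ forces $\uh\in B$ but not conversely, this shows $A*u\subset B$ implies $A*\uh\subset B$; equivalently $u\in C$ implies $\uh\in C$, which reassuringly reproduces the standing convention for $C$.

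With these facts I would match the two formulas case by case. If $A*\uh\not\subset B$ then $\uh\notin C$ and hence $u\notin C$, so $\uh\ein C$ fails and both formulas return the plain $C=(\pp{B}-A)_{\mathcal{S}_{A\rightarrow B}}^\perp$. If $A*\uh\subset B$ with $u\notin C$, then $\uh\ein C$ holds and both formulas append $\{\uh\}$. The only delicate case is $A*\uh\subset B$ with $u\in C$: here (\ref{eq:app}) takes its ``otherwise'' branch and returns $C=(\pp{B}-A)_{\mathcal{S}_{A\rightarrow B}}^\perp$ without an explicit $\uh$, while the statement appends $\{\uh\}$; these coincide because $u\in C=(\pp{B}-A)_{\mathcal{S}_{A\rightarrow B}}^\perp$ already forces $\uh$ to be present by the convention, making the appended $\{\uh\}$ redundant.

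I expect this last reconciliation --- that an explicitly adjoined $\{\uh\}$ is invisible once $u$ is present --- to be the only genuine subtlety; the rest is the involutivity of perp supplied by (\ref{eq:app}) and the defining equivalence $\uh\in(A\rightarrow B)\iff A*\uh\subset B$. The one bookkeeping point to confirm in passing is that $\pp{B}-A$ is genuinely supported on $\mathcal{S}_{A\rightarrow B}$, so that the second perp is taken over the correct system set; this is immediate from $\pp{C}=\pp{B}-A$, since $\pp{C}$ lives on $\mathcal{S}_C=\mathcal{S}_{A\rightarrow B}$.
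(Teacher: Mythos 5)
Your proof is correct and takes essentially the same route as the paper, whose own proof of this proposition is exactly the one-line observation that it is a direct consequence of (\ref{eq:app}) and (\ref{eq:atbp}). You merely unpack what the paper leaves implicit: the bridge $\uh\in (A\rightarrow B)\iff A*\uh\subset B$ read off the definition (\ref{eq:atb}), and the reconciliation of the $u\in A\rightarrow B$ case via the convention that $\uh$ is implicitly present whenever $u$ is.
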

\begin{proof}
This is a direct consequence of (\ref{eq:app}) and (\ref{eq:atbp}).
\end{proof}
This establishes the second part (\ref{eq:catb}) of \Cref{th:catb}.

\section{Third part proof of \Cref{th:catb}}\label{sec:pcme}

Given $A$, $B$ and $\mathcal{S}_*$ disjoint from $\mathcal{S}_B$, it is not guaranteed that there are non-zero elements in $\mathcal{C}_{A}\rightarrow \mathcal{C}_{B}$, even when $A\rightarrow B\ne \varnothing$. 
\begin{example}
Let $\mathcal{S}=\{s_1,s_2,s_3\}, \mathcal{S}_A=\{s_2,s_3\}$, $\mathcal{S}_B=\{s_1,s_3\}$, $\mathcal{S}_*=\{s_2\}$, $A=\{\uh,001,011\}$, and $B=\{\uh,101\}$. Then $\mathcal{C}_{A}\rightarrow \mathcal{C}_{B}=\{0\}$.
\end{example}
\noindent By (\ref{eq:catb}), $A\rightarrow B=\{110\}$. Yet $\mathcal{C}_{A}\rightarrow \mathcal{C}_{B}=\mathcal{C}_{A\rightarrow B}=L_{A\rightarrow B}\cap \mathcal{C}=\{0\}$. 
The reason that $A\rightarrow B=\{110\}$ does not yield non-zero processes is the following.
\begin{lemma}\label{lm:ctuuh} 
Let $\mathsf{A}\ne 0$ be a positive semidefinite operator with correlation type $A$. Either $u\in A$ (implying $\uh \in A$ by our convention) or $\uh\ein A$.
\end{lemma}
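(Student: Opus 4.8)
The plan is to argue through the trace. Since $\mathsf{A}$ is positive semidefinite and nonzero, its trace is strictly positive, $\Tr\mathsf{A}>0$. I would then isolate the traceful part of the correlation type expansion $\mathsf{A}=\sum_{a\in A}\mathsf{A}_a$ of (\ref{eq:cte}). Recall from \Cref{sec:gpo} that among the generalized Pauli operators only $\sigma_{(d,d)}\propto\id$ is traceful, so on each tensor factor the traceless subspace $L_1$ consists of trace-zero operators. Because the trace of a product operator factorizes over the tensor factors, any component $\mathsf{A}_a$ with $a\ne u$ — that is, with at least one factor $a_j=1$ — is traceless. Hence the only component that can contribute to the trace is $\mathsf{A}_u$, the projection onto $L_u$, and $\Tr\mathsf{A}=\Tr\mathsf{A}_u$.

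Next I would compute this projection explicitly. Since $L_u$ is the one-dimensional space of multiples of $\id$, the Hilbert-Schmidt projection gives $\mathsf{A}_u=(\Tr\mathsf{A}/d)\,\id$ with $d=\dim\Hil=\Tr\id$. As $\Tr\mathsf{A}>0$, the coefficient $\Tr\mathsf{A}/d$ is strictly positive, so $\mathsf{A}_u\ne 0$. Thus the $u$-type is necessarily represented in the expansion, which already rules out the possibility that neither $u$ nor $\uh$ appears in $A$.

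Finally I would invoke the naming convention ``use $\uh$ instead of $u$ whenever possible,'' recalling $L_{\uh}=\{\id\}$. If the coefficient equals $1$, i.e.\ $\mathsf{A}_u=\id$, then the $u$-component lies in $L_{\uh}$, so we write $\uh$ rather than $u$, giving $\uh\ein A$. If the coefficient differs from $1$, then $\mathsf{A}_u$ is a multiple of $\id$ that is not $\id$ itself, so $\uh$ cannot be used; we write $u\in A$, and by the standing convention $\uh\in A$ as well. Either way the claimed dichotomy holds.

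I expect the only delicate point to be the bookkeeping around the $\uh$/$u$ convention rather than the trace computation, which is essentially immediate from \Cref{sec:gpo}. Care is needed to verify that the coefficient-$1$ case is exactly the one assigned to $\uh$ (so that $u\notin A$ there) and that no configuration of a nonzero positive semidefinite operator leaves the $u$-type component unrepresented.
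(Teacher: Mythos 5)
Your proof is correct and follows essentially the same route as the paper's: positivity of $\Tr\mathsf{A}$ for a nonzero positive semidefinite operator, plus the observation that only the $u$/$\uh$ type components carry nonzero trace, forces the $u$-type projection $\mathsf{A}_u=(\Tr\mathsf{A}/d)\,\id\ne 0$. Your explicit projection computation and the case split on whether $\mathsf{A}_u=\id$ (giving $\uh\ein A$) or not (giving $u\in A$, hence $\uh\in A$ by convention) merely spell out what the paper compresses into ``only $u$ and $\uh$ supply positive trace, so the result follows.''
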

\begin{proof}
By assumption, $\mathsf{A}>0$, so $\Tr \mathsf{A}>0$. Among all the correlation type elements, only $u$ and $\uh$ supply positive trace, so the result follows.
\end{proof}

\begin{proposition}\label{th:cme}
$\mathcal{C}_{A}\rightarrow \mathcal{C}_{B}\ne \{0\}$ if and only if $A*\uh\subset B$.
\end{proposition}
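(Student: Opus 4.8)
The plan is to reduce the statement to a clean criterion about the single element $\uh$, and then invoke the trace-positivity input already packaged in \Cref{lm:ctuuh}. First I would rewrite the left-hand side using the first part of \Cref{th:catb} (\Cref{th:cabc}), namely $\mathcal{C}_{A}\rightarrow \mathcal{C}_{B}=\mathcal{C}_{A\rightarrow B}$, so that the claim becomes $\mathcal{C}_{A\rightarrow B}\ne\{0\}$ if and only if $A*\uh\subset B$. Next I would observe that the condition $A*\uh\subset B$ is, by the very definition $A\rightarrow B=\{c:A*c\subset B\}$ in (\ref{eq:atb}), nothing but the statement $\uh\in A\rightarrow B$. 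Thus the whole proposition reduces to the general fact that for any set $C$ of correlation type elements, $\mathcal{C}_C\ne\{0\}$ if and only if $\uh\in C$.

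I would then prove this equivalence in two directions. For the ``if'' direction, assuming $\uh\in C$, I would simply exhibit the identity operator $\id$ on the relevant systems: it is non-zero and positive semidefinite, and its correlation type is exactly $\{\uh\}$ (since $L_{\uh}=\{\id\}$), which is contained in $C$, so $\id\in\mathcal{C}_C$ and hence $\mathcal{C}_C\ne\{0\}$. For the ``only if'' direction, assuming $\mathcal{C}_C\ne\{0\}$, I would pick any non-zero $\mathsf{C}\in\mathcal{C}_C$, whose correlation type is some subset of $C$, and apply \Cref{lm:ctuuh}: a non-zero positive semidefinite operator has strictly positive trace, and since only $u$ and $\uh$ carry trace, its correlation type must contain $\uh$. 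Therefore $\uh\in C$, and tracing the reductions back gives $A*\uh\subset B$.

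The step carrying the real content is the ``only if'' direction, where the conclusion rests entirely on \Cref{lm:ctuuh}: the positivity and non-vanishing of the Choi operator force a positive trace, which only the traceful element can supply, pinning $\uh$ into the correlation type. I do not anticipate a genuine obstacle here, since the trace argument is exactly what \Cref{lm:ctuuh} provides; the only care needed is bookkeeping, namely checking that the identity on the decomponent's systems indeed realizes the type $\{\uh\}$, and that the reduction through $\mathcal{C}_{A\rightarrow B}$ together with the definition of $A\rightarrow B$ correctly aligns the condition $A*\uh\subset B$ with the membership $\uh\in A\rightarrow B$.
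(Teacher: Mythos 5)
Your proposal is correct and takes essentially the same route as the paper's own proof: reduce via \Cref{th:cabc} to $\mathcal{C}_{A\rightarrow B}$, invoke Lemma \ref{lm:ctuuh} (trace positivity pinning $\uh$ into the type) for the ``only if'' direction, and exhibit $\id$ as the nonzero witness when $\uh\in A\rightarrow B$ for the ``if'' direction. Your explicit isolation of the general fact that $\mathcal{C}_C\ne\{0\}$ if and only if $\uh\in C$ is merely a repackaging of the argument the paper carries out inline.
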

\begin{proof}
$\mathcal{C}_{A}\rightarrow \mathcal{C}_{B}=\mathcal{C}_{A\rightarrow B}$ by \Cref{th:cabc}. Suppose $\mathcal{C}_{A\rightarrow B}=\mathcal{C}_{A}\rightarrow \mathcal{C}_{B}\ne \{0\}$. By Lemma \ref{lm:ctuuh}, $\uh\in A\rightarrow B$, and $A*\uh\subset B$. Now suppose $A*\uh\subset B$. Then $\uh \in A\rightarrow B$, and $\id \in \mathcal{C}_{A\rightarrow B}=\mathcal{C}_{A}\rightarrow \mathcal{C}_{B}$.
\end{proof}
This establishes the third part (\ref{eq:ec}) of \Cref{th:catb}.

\bibliographystyle{unsrt}
\bibliography{mendeley.bib}
\end{document}